\newtheorem{theorem}{Theorem}
\newtheorem{lemma}[theorem]{Lemma}
\newtheorem{proposition}[theorem]{Proposition}
\newtheorem{definition}[theorem]{Definition}
\newtheorem{assumption}[theorem]{Assumption}
\newcommand \onecolumnfigurewidth {0.618\textwidth}
\newcommand \threecolumnfigurewidth {0.32\textwidth}
\newcommand{\bb}[1]{\mathbb{#1}}
\newcommand{\ca}[1]{\mathcal{#1}}
\DeclareMathOperator{\poly}{poly}
\DeclareMathOperator{\polylog}{poly\,log}
\DeclareMathOperator{\argmin}{arg\,min}
\DeclareMathOperator{\re}{Re}
\DeclareMathOperator{\im}{Im}
\begin{document}

\title{Classical combinations of quantum states for solving banded circulant linear systems}

\author{Po-Wei Huang}
\email{huangpowei22@u.nus.edu}
\affiliation{Centre for Quantum Technologies, National University of Singapore, Singapore 117543}

\author{Xiufan Li}
\email{lixiufan@u.nus.edu}
\affiliation{Centre for Quantum Technologies, National University of Singapore, Singapore 117543}

\author{Kelvin Koor}
\affiliation{Centre for Quantum Technologies, National University of Singapore, Singapore 117543}

\author{Patrick Rebentrost}
\email{cqtfpr@nus.edu.sg}
\affiliation{Centre for Quantum Technologies, National University of Singapore, Singapore 117543}
\affiliation{Department of Computer Science, National University of Singapore, Singapore 117417}

\date{\today}

\begin{abstract}
Solving linear systems is of great importance in numerous fields. Proposed quantum algorithms for preparing solutions for linear systems include the HHL algorithm with subsequent refinements and variational methods. 
Circulant linear systems appear in many physics-related differential equations. An interesting case is banded circulant linear systems whose non-zero terms are within distance $K$ of the main diagonal.
For these systems, we propose an approach based on the classical combination of quantum states (CQS) method relying on convex optimization against the available analytical solution. From decompositions into cyclic permutations, the solution can be approximately represented by a classical combination of a polynomial number of quantum states. We validate our methods using classical simulations as well as execution on an IBM quantum computer. While in the setting of this paper, efficient classical algorithms are available, our results demonstrate the potential applicability of the CQS method for solving physics problems such as heat transfer.
\end{abstract}

\maketitle

\section{Introduction}
Quantum algorithms have been designed to find approximate solutions to linear systems of equations. \citet{harrow2009quantum} gave an efficient quantum algorithm that prepares solutions to sparse and well-conditioned linear systems. However, the HHL algorithm consumes a large amount of quantum resources and will not be useful before the realization of fault-tolerant quantum computers. On the other hand, variational methods~\citep{cerezo2021variational, xu2021variational} such as the variational quantum linear solver (VQLS)~\citep{bravo_prieto2020variational} are shown to be executable on noisy intermediate-scale quantum (NISQ) devices~\citep{preskill2018quantum} but suffer from the barren plateau problem~\citep{mcclean2018barren} as well as a lack of performance guarantees. To address these problems, \citet{huang2021term} proposed an efficient hybrid quantum-classical linear systems solver based on \emph{classical combinations of quantum states} (CQS), Ansatz trees, and the Hadamard test. Given a linear system $Ax=b$ where an efficient quantum state representation $\ket{b}$ can be prepared, as well as a decomposition of $A = \sum_i \beta_i U_i$ into a weighted sum of efficiently implementable operators $A_i$, based on Krylov methods for constructing inverse operators $A^{-1}$, one can construct the solution vector $x$ as a weighted sum of $x = \sum_j \alpha_j W_j \ket{b}$, where $W_j$ is a product of multiple combinations of $U_i$ heuristically found by traversing an Ansatz tree, presenting the solution vector as a \emph{classical combination of quantum states}. By combining measurement results from Hadamard tests~\citep{cleve1998quantum} and classical convex optimization, one can obtain the exact values of $\alpha_j$ required to reconstruct the solution vector $x$. Besides providing theoretical guarantees of finding the approximate solution, the CQS algorithm only requires reasonable amounts of quantum resources (such as qubit count and single-qubit controlled gates). The algorithm saves ancilla qubits and circuit depth by removing the preparation of linear combinations of unitaries compared to the fully coherent version of the algorithm. Hence, this algorithm can be considered an example for the \textit{early fault-tolerant} quantum era~\citep{campbell2021early, dong2022groundstate, wan2022randomized, wang2024qubitefficient, katabarwa2024early}.

Solving \emph{circulant} linear systems poses a fundamental problem in numerous fields such as modelling physical systems, signal processing, and image reconstruction. The special properties of these systems permit fast diagonalization via the fast Fourier transform (FFT)~\citep{davis1979circulant}, requiring only $\mathcal O (N\log N)$ operations~\citep{cooley1965algorithm} where $N$ is the system size), compared to $\mathcal O (N^3)$ via Gaussian elimination. Furthermore, the conjugate gradient method takes $\mathcal O (N\kappa)$ operations~\citep{hestenes1952methods}, where $\kappa$ is the condition number. Even faster classical algorithms based on direct applications of the Krylov method have been designed~\citep{saad1981krylov}. In terms of quantum algorithms, \citet{zhou2017efficient} have demonstrated an efficient implementation of circulant matrices through unitary decomposition and the quantum Fourier transform (QFT)~\citep{coppersmith1994approximate}, allowing the problem to be solved using quantum linear system solvers. \citet{wan2018asymptotic} directly constructed the inverse of the circulant matrix by diagonalizing via QFT and amplitude amplification~\citep{brassard2002quantum}. 

In this work, we adapt the classical combination of quantum states (CQS) method by \citet{huang2021term} in the context of solving circulant linear systems. Circulant systems provide substantial technical simplifications in the CQS algorithm and reduce the number of circuits required for evaluation. Precisely because of this simplicity, a classical algorithm that analytically computes the coefficients obtained in the CQS algorithm can also be constructed. Consequently, we compare the results of the CQS method against the analytical results obtained from the classical algorithm. Although the problem we investigate in this paper is classically tractable, the paper aims to provide a proposal of methodologies regarding quantum algorithms in the early fault-tolerant regime that warrants further investigation. 

Following~\citet{huang2021term}, instead of explicitly diagonalizing the matrix with the discrete Fourier transform (DFT), we first decompose the circulant matrix into a sum of cyclic permutation operators. These operators can then be diagonalized individually by DFT. Using QFT in our algorithm requires quantum hardware beyond the capacities of current near-term systems~\citep{nam2020approximate}. Compared to the naive application of CQS, our theoretical results show that such a decomposition provides a much tighter upper bound for convergence guarantees, allowing an experimentally achievable guarantee without invoking gradient expansion Ansatz trees and other heuristical methods, or additional regularization such as Tikhonov regularized regression on the solution vector.

The numerical performances of our algorithms are demonstrated through numerical solutions\footnote{Code implementation can be found at \url{https://github.com/LiXiufan/qa-cqs-circulant}.} to one-dimensional time-dependent heat transfer problems discretized by finite difference methods~\citep{ozicsik2017finite,thomas1995numerical}. Through classical simulations, we show that both the hybrid algorithm and the quantum-inspired algorithm provide good approximations to the exact solutions obtained by the Krylov method. Furthermore, we implement the hybrid algorithm on IBM quantum hardware with a mean-square-error (MSE)  of less than 0.05, demonstrating the feasibility of solving physical problems on practical quantum computers.

\section{Preliminaries}\label{sectPrelim}
\subsection{Notations.} Let $[n]:=\{0, 1, \dots, n-1\}$ and $[a \dots b]$ be the range of integers from $a$ to $b$, inclusive of endpoints. Given a field $\mathbb{F}$ of real or complex numbers, for vectors $u, v \in \mathbb{F}^N$, we denote their inner product by $\langle u, v \rangle = u^\dagger v$, where $u^\dagger$ is the adjoint of $u$. For quantum states, we use Dirac notation $\braket{u|v}$. We denote a vector's $\ell_2$ norm by $\|v\|:= \sqrt{\sum_{i=1}^N |v_i|^2} = \langle v, v\rangle$. Let $\mathcal M_{N}(\mathbb{F})$ be the space of square matrices of size $N$ on the field $\mathbb{F}$.  For a matrix $A\in \mathcal M_{N}(\mathbb{C})$, let $A_{ij}$ be the $(i, j)$-element of $A$. We denote the spectral norm by $\|A\|:= \sup_{x\ne 0}\frac{\|Ax\|}{\|x\|} = \max_{i}\sigma_i(A)$, and the Frobenius norm by $\|A\|_F = \sqrt{\sum_i\sum_j |A_{ij}|^2} = \sqrt{\sum_{i=1}^{N} \sigma_i^2(A)}$. where $\sigma_i(A)$ are the singular values of $A$. We denote the transpose of $A$ by $A^T$, the adjoint by $A^\dagger$, and the pseudoinverse by $A^{-1}$. Note that $\|A^{-1}\| = \frac{1}{\sigma_{\min}(A)}$, where $\sigma_{\min}(A)$ is the smallest non-zero singular value of $A$. The condition number of $A$ is denoted by $\kappa_A := \|A\|\|A^{-1}\| = \frac{\sigma_{\max}(A)}{\sigma_{\min}(A)}$, and the eigenvalues by $\lambda(A)$. Lastly, we note that quantum circuits in this paper are ordered in the big-endian format, where the least significant qubit is placed at the bottom of the circuit. 

\subsection{Circulant matrices.}
In this paper, we are interested in a special yet useful case of solving $Cx=b$, namely when $C$ is a banded circulant matrix. We begin by defining and discussing the properties of circulant matrices. A more in-depth review of these matrices can be found in~\citep{davis1979circulant,gray2005toeplitz, chen1987solution}.
\begin{definition}[Circulant matrix]
Let $\mathbb{F}$ = $\mathbb{R}$ or $\mathbb{C}$. A matrix $C \in \mathcal M_{N}(\mathbb{F})$ is called a \textit{circulant matrix} if it takes the form
\begin{equation*}
    C = \begin{pmatrix}
    c_0 & c_{N-1} & c_{N-2} & \cdots & c_1\\ 
    c_1 & c_0 & c_{N-1} & \cdots & c_2\\ 
    c_2 & c_1 & c_0 & \cdots & c_3\\ 
    \vdots & \vdots & \vdots & \ddots & \vdots\\
    c_{N-1} & c_{N-2} & c_{N-3} & \cdots & c_0    \end{pmatrix}.
\end{equation*}
\end{definition}
In this matrix, each column (row) vector is the previous column (row) cyclically shifted by one step. Any circulant matrix $C$ can be written as a linear combination of \textit{cyclic permutation matrices} $Q^\ell \in \mathcal M_{N}(\mathbb{R})$:
\begin{equation}
\label{eqPermute}
    C = \sum_{\ell=0}^{N-1} c_\ell Q^\ell, \quad  Q = \begin{pmatrix}
    0 & 0 & 0 & \cdots & 0 & 1\\ 
    1 & 0 & 0 & \cdots & 0 & 0\\ 
    0 & 1 & 0 & \cdots & 0 & 0\\
    \vdots & \vdots & \vdots & \ddots & \vdots & \vdots\\
    0 & 0 & 0 & \cdots & 1 & 0\\ \end{pmatrix}.
\end{equation}
One can see that $C$ is normal, unitary, and diagonalizable. The $k$-th eigenvalue of $Q$ is given by $\lambda_k(Q) = \omega_N^k$, for all $k \in [N]$, where we have adopted the notation $\omega_N^k := e^{2\pi ik/N}$. The $k$-th eigenvalue of circulant matrix $C$ is thus $\lambda_k(C) = \sum_{\ell=0}^{N-1}c_\ell\omega_N^{k\ell}$ for all $k \in [N]$. Observe that the $k$-th eigenvector of $C$ is $x^{(k)} = \frac{1}{\sqrt{N}} \begin{pmatrix}1&\omega_N^{k}&\omega_N^{2k}&\cdots&\omega_N^{(N-1)k}\end{pmatrix}^T$. Putting them together to form the diagonalizing matrix $F$, we see that $F_{jk} = \frac{1}{\sqrt{N}}\omega_N^{jk} = \frac{1}{\sqrt{N}}e^{2\pi ijk/N}$, or more explicitly,
\begin{equation}
    F = \frac{1}{\sqrt{N}}
    \begin{pmatrix}
         & \vdots & \\
        \dots & \omega_N^{jk} & \dots\\
         & \vdots & 
    \end{pmatrix}.
\end{equation}
This is the matrix implementing the discrete Fourier transform (DFT), which is used to diagonalize circulant matrices.

\subsection{Banded circulant matrices.}
We now define the banded circulant matrix we aim to address in our paper.
\begin{definition}[$K$-banded circulant matrix~\citep{chen1987solution}]\label{defBandedCirculantMatrix}
Given $K \in \mathcal O(\polylog N)$, a $K$-banded circulant matrix is a ``sparse'' circulant matrix where $c_{K+1}, c_{K+2}, \cdots, c_{N-K-1} = 0$:
\begin{equation*}
C=\setlength\arraycolsep{1pt}\begin{pmatrix} 
c_0 & c_{N-1} &  \cdots & c_{N-K}   & &  &   c_K & \cdots & c_1\cr
c_1 & c_0 &\ddots &  & \ddots     &  & & \ddots & \vdots\cr
\vdots & \ddots  & \ddots &\ddots   & &  \ddots     &  & & c_K \\
c_K & &\ddots  &\ddots  & \ddots &  & \ddots  & & \\
 &    \ddots  &   &\ddots   &\ddots   & \ddots  &   & \ddots  &  \\
 &  &   \ddots  &   &\ddots   &\ddots   & \ddots  &   & c_{N-K}  \\
c_{N-K} &   &   &   \ddots   &   &\ddots  &\ddots   & \ddots  & \vdots \\
\vdots & \ddots   &   &   &   \ddots   &   &\ddots  & c_0  &c_{N-1}  \\
c_{N-1} & \cdots  & c_{N-K}   &   &   &    c_K  & \cdots  & c_1  & c_0
\end{pmatrix}
\end{equation*}
\end{definition}
Noting that $Q^{N-\ell} = Q^{-\ell}$, we define $c_{-\ell} := c_{N-\ell}$ for $\ell\in[N]$. Therefore, a $K$-banded circulant matrix can be expressed as a degree-$2K+1$ matrix polynomial:
\begin{equation}
    C = \sum_{\ell=-K}^{K} c_\ell Q^\ell.
\end{equation}

\section{Solving banded circulant linear systems}
\subsection{Problem statement and solution.}
Let us be given a $K$-banded circulant matrix $C\in \mathcal M_{N}(\mathbb{C})$ and a normalized vector $\bm b\in \mathbb{C}^N$. Our task is to find a good estimator $\tilde{x}$ for the least squares solution $x^*=C^{-1}\bm b$, such that the mean-square-error (MSE) loss of $\tilde{x}$, or $\|C\tilde{x}-\bm b\|^2$, is close to the MSE loss of $x^*$, or $\|Cx^*-\bm b\|^2$. More precisely, given additive error tolerance $\varepsilon$, we aim to find $\tilde{x}$ such that
\begin{equation}\label{eqMainProblem}
    \mathcal{L}_{{\rm MSE}}(\tilde x) := \|C\tilde x -\bm b\|^2 \le \|Cx^* -\bm b\|^2 + \varepsilon = \min_{x\in\mathbb C^N}\|C x -\bm b\|^2 + \varepsilon = \min_{x\in\mathbb C^N}\mathcal L_{{\rm MSE}} (x) + \varepsilon.
\end{equation}

By using Chebyshev polynomials to the $\mathcal{O}(K \cdot \kappa_C\log\frac{\kappa_C}{\nu})$-th degree~\citep{childs2017quantum}, it is possible to obtain a closed-form solution for approximating $x^* = C^{-1}\bm b$ such that $\mathcal L_{{\rm MSE}}(\tilde x)$ is $\nu$-close to $\mathcal \min_{x\in\mathbb C^N} \mathcal L_{{\rm MSE}}(x)$ as stated by the proposition below. The proof can be found in \cref{appendixGuarantee}. 

\begin{proposition}\label{propGuarantee}
Let $0 < \nu \le 1$. Given a $K$-banded circulant matrix $C\in \mathcal M_{N}(\mathbb{C})$ where $\kappa_C = \|C\|\|C^{-1}\|$ and a normalized vector $\bm b\in \mathbb{C}^N$, there exists $T\in \mathcal{O}(K \cdot \kappa_C\log\frac{\kappa_C}{\nu})$ such that we can find an optimal set of parameters $\alpha = \{\alpha_m \in \mathbb{C}, \forall m\in[-T.. T] \}$ and that the estimator
$\tilde{x}(\alpha) = \sum_{m=-T}^T \alpha_{m}Q^m\bm b$ satisfies
\begin{equation*}
    \min_{\alpha \in \mathbb{C}^{2 T + 1}} \|C\tilde{x}(\alpha) - \bm b\|^2 \le \min_{x \in \mathbb{C}^N} \|Cx - \bm b\|^2 + \nu.
\end{equation*}
\end{proposition}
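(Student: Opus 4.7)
The plan is to exhibit an explicit coefficient vector $\hat\alpha$ whose estimator $\tilde x(\hat\alpha)$ already satisfies the claimed MSE bound; optimality over $\alpha$ then yields the proposition. Since $C$ is invertible, $\min_{x\in\mathbb C^N}\|Cx-\bm b\|^2=0$, so it suffices to produce $\hat\alpha$ with $\|C\tilde x(\hat\alpha)-\bm b\|^2\le\nu$. The construction will rest on two ingredients: (i) a standard Chebyshev-type polynomial approximation of $1/x$ on $[\kappa_C^{-2},1]$, and (ii) the observation that both $C$ and $C^\dagger$ are Laurent polynomials of degree $K$ in $Q$, so that the Laurent degree of any polynomial in $C^\dagger C$ times $C^\dagger$ can be tracked explicitly.

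First, I would set $\tilde D := C^\dagger C/\|C\|^2$, which is Hermitian PSD with spectrum in $[\kappa_C^{-2},1]$, and invoke the tight Chebyshev approximation fact: there exists a polynomial $P$ of degree $d=O(\kappa_C\log(\kappa_C/\nu))$ with $|\mu P(\mu)-1|\le\sqrt\nu$ for every $\mu\in[\kappa_C^{-2},1]$ (obtained by rescaling the classical Chebyshev approximation of $1/x$ on $[1,\kappa_C^2]$). I would then take
\[
    \hat x := \|C\|^{-2}\, P(\tilde D)\, C^\dagger\, \bm b.
\]
Because $C$ is circulant and therefore normal, $C$ commutes with $C^\dagger C$ and hence with $P(\tilde D)$, so $C\hat x=\|C\|^{-2}P(\tilde D)\,CC^\dagger\bm b=P(\tilde D)\,\tilde D\,\bm b$. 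Diagonalizing $\tilde D$, the contribution from each eigenvalue $\mu\in[\kappa_C^{-2},1]$ is $|\mu P(\mu)-1|\le\sqrt\nu$, giving $\|C\hat x-\bm b\|^2\le\nu$.

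Next, I would read off the Ansatz representation. Since $Q^\dagger=Q^{-1}$, $C^\dagger=\sum_{\ell=-K}^K c_{-\ell}^*Q^\ell$ is itself Laurent degree $K$ in $Q$, so $\tilde D$ has Laurent degree $2K$, $P(\tilde D)$ has Laurent degree $2Kd$, and $P(\tilde D)C^\dagger$ has Laurent degree $K(2d+1)$. Setting $T:=K(2d+1)=O(K\kappa_C\log(\kappa_C/\nu))$, the candidate $\hat x$ can be written as $\sum_{m=-T}^T\hat\alpha_m Q^m\bm b$ for some complex coefficients $\hat\alpha_m$, which furnishes the required $\hat\alpha$.

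The main obstacle will be invoking the Chebyshev-type approximation lemma with the tight $O(\kappa_C\log(\kappa_C/\nu))$ bound: a naive Neumann-series expansion of $(C^\dagger C)^{-1}$ would only deliver a $\kappa_C^2$ dependence, costing the linear scaling in $\kappa_C$. Careful tracking of the normalization factor $\|C\|^{-2}$ is also needed so that the final estimate is expressed purely through $\kappa_C$ rather than through $\|C\|$ and $\|C^{-1}\|$ separately. Once that lemma is in place, the rest — using normality to commute $C$ past $P(\tilde D)$, and counting Laurent degrees in $Q$ — is routine.
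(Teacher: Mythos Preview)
Your proposal is correct and reaches the same bound $T=\mathcal O(K\kappa_C\log(\kappa_C/\nu))$, but the route differs from the paper's in a notable way. The paper splits into two cases: for Hermitian $C$ it normalizes to $\hat C=C/\|C\|$, whose eigenvalues lie in $[-1,-\kappa_C^{-1}]\cup[\kappa_C^{-1},1]$, and invokes Lemma~14 of Childs--Kothari--Somma for a degree-$j_0=\mathcal O(\kappa_C\log(\kappa_C/\nu))$ polynomial approximating $x^{-1}$ on that two-sided domain; for non-Hermitian $C$ it passes to the Hermitian dilation $C'=\begin{pmatrix}O&C\\C^\dagger&O\end{pmatrix}$, applies the Hermitian argument there, and then extracts the relevant block. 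Your approach instead works directly with $\tilde D=C^\dagger C/\|C\|^2$, which is Hermitian PSD with spectrum in $[\kappa_C^{-2},1]$, and uses a one-sided Chebyshev approximation of $1/x$ there; the non-Hermitian case is handled in one stroke by exploiting the normality of circulant matrices ($CC^\dagger=C^\dagger C$) to commute $C$ past $P(\tilde D)$. This is cleaner and more specific to the circulant setting, and it sidesteps the block-extraction bookkeeping of the dilation argument. The paper's dilation approach, on the other hand, would transfer verbatim to non-normal matrices admitting a decomposition into powers of a fixed unitary, whereas your commutation step genuinely uses normality. Both arguments rest on the same Chebyshev-type degree bound and the same Laurent-degree count in $Q$, so the asymptotic conclusion is identical.
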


The constructive proof of the algorithm gives way to a classical implementation that would provide a Chebyshev expansion-esque construction of the solution of $x$ via direct computation of the Chebyshev coefficients. In this paper, we focus on comparing the results of the CQS method against the analytical results that can be obtained from such a classical algorithm or simply conducting matrix inversion, instead of providing a particular use case where the hardness of the problem is raised to a significant level, where quantum advantage can be proposed.

We note that the above proposition provides a tighter bound on the number of quantum states required compared to the $(2K+1)^{\mathcal O(\kappa_C\log\frac{\kappa_C}{\nu})}$ upper bound of a full Ansatz tree obtained by~\citet{huang2021term}. 

\begin{figure}
\includegraphics[width=\textwidth]{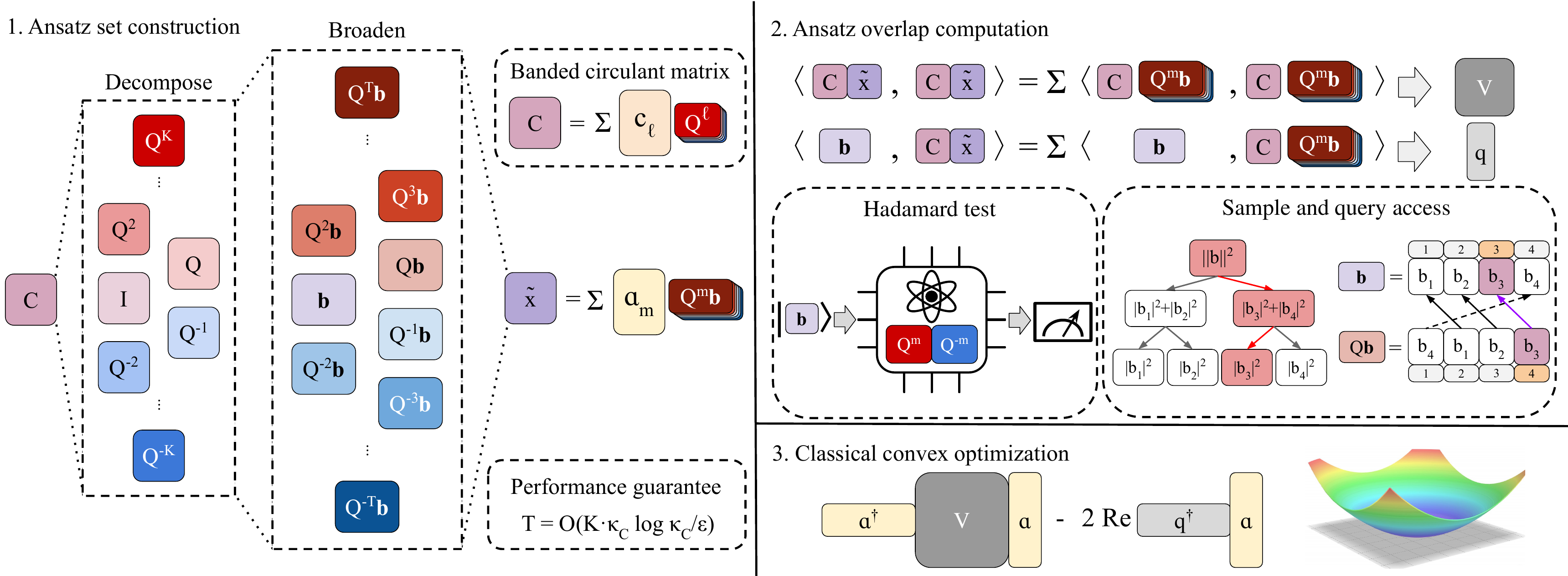}
\caption{Here we provide an illustrated overview of our method to solve $K$-banded circulant linear systems. Note the fact that $C$ can be written as a matrix polynomial in terms of the cyclic permutation matrix $Q$ given in \cref{eqPermute} with the powers being bounded between $-K$ and $K$. In step (1), we set a truncation threshold $T\ge K$ to create a matrix polynomial with powers of $Q$ from $-T$ to $T$ that serves as an approximation to $C^{-1}$. To find the optimal coefficients of the matrix polynomial $\alpha$, in step (2), we calculate the overlaps between individual components $CQ^m\bm b$ as well as the overlap between $CQ^m\bm b$ and $\bm b$. Such overlaps can be computed with either the Hadamard test for the hybrid quantum-classical algorithm or sample and query access for the quantum-inspired algorithm. Lastly, in step (3), we perform classical convex optimization to find the optimal coefficients $\alpha$.}
\label{figMain}
\end{figure}

\subsection{CQS algorithm.} 
We employ the CQS technique by~\citet{huang2021term} and use \cref{algoMain} to solve the given circulant linear system of equations. An overview can be found in \cref{figMain}. We outline the idea behind the algorithm below.

\begin{figure}
\begin{algorithm}[H]
\caption{High-level algorithm for solving banded circulant linear systems}
\label{algoMain}
\DontPrintSemicolon
\KwIn{Decomposition coefficients of $K$-banded circulant matrix $\{c_\ell\}_{\ell\in[-K..K]}$, Normalized column vector $\bm b$, Truncation threshold $T$.}
\KwOut{Combination parameters $\{\alpha_m\}_{m\in[-T.. T]}$, Sketches of $\{\bm u_m: \bm u_m = Q^m \bm b, \forall m\in [-T.. T]\}$}
Estimate entry values of matrix $V^R \in \mathcal M_{2T+1}(\mathbb R)$, where $V^R_{jk} = \re\left(\sum\limits_{y=-K}^K\sum\limits_{z=-K}^K c_y^*c_z \braket{\bm b, Q^{y-z+k-j}\bm b}\right)$.\;
Estimate entry values of matrix $V^I \in \mathcal M_{2T+1}(\mathbb R)$, where $V^I_{jk} = \im\left(\sum\limits_{y=-K}^K\sum\limits_{z=-K}^K c_y^*c_z \braket{\bm b, Q^{y-z+k-j}\bm b}\right)$.\;
Estimate entry values of vector $q^R \in \mathbb R^{2T+1}$, where $q^R_{j} = \re\left(\sum\limits_{y=-K}^K c_y \braket{\bm b, Q^{y+j}\bm b}\right)$.\;
Estimate entry values of vector $q^I \in \mathbb R^{2T+1}$, where $q^I_{j} = \im\left(\sum\limits_{y=-K}^K c_y \braket{\bm b, Q^{y+j}\bm b}\right)$.\;
Let $W = \begin{pmatrix} V^R & -V^I \\ V^I & V^R \end{pmatrix}$ and $r = \begin{pmatrix}q^R \\ q^I\end{pmatrix}$.\;
Solve for $z$ in the convex optimization problem $z^T W z - 2r^T z +1$.
$\forall m \in [-T.. T]$, let $\alpha_m = z_m + i \cdot z_{2T+1+m}$.\;
\Return $\{(\alpha_m, Q^m\bm b)\}_{m\in [-T.. T]}$\;
\end{algorithm}
\end{figure}

Per high-level intuition, we see that the estimator $\tilde{x}$ is assembled from a set of quantum states such that given a truncation threshold $T \ge K$, we have the tunable parameter set $\alpha:= \{\alpha_i: \alpha_i \in \mathbb C\}$ and
\begin{equation}
\tilde x(\alpha) = \sum_{m=-T}^{T} \alpha_m \bm u_m = \sum_{m=-T}^{T} \alpha_m Q^m  \bm{b} .
\end{equation}
We minimize the combination parameters $\alpha$ over the MSE loss 
\begin{equation}
 \mathcal{L}_{{\rm MSE}}(\tilde x) =  \|C\tilde{x} - \bm b\|^2 = \braket{C \tilde x, C \tilde x} - 2 \re \{\braket{\bm b, C\tilde{x}}\} + 1,
\end{equation}
to achieve an approximation to $x^*$. Following \citep{huang2021term}, we construct the complex matrix $V\in \mathcal M_{2T+1}(\mathbb C)$ and complex vector $q \in \mathbb{C}^{2T+1}$ with
$V_{jk} := \braket{C\bm u_j, C\bm u_k}$ and $q_j := \braket{\bm b,C\bm u_j}$ such that
\begin{equation}\label{eqSolve}
\|C\tilde{x} - \bm b\|^2 = \alpha^\dagger V \alpha - 2 \re \{ q^\dagger \alpha \} + 1.
\end{equation}
We note that the terms $V_{jk}$ and $q_j$ can be computed from the summation of inner products of the form $\braket{b, Q^jb}$ for $k\in [-2T-2K.. 2T+2K]$:
\begin{align}\label{eqVjk}
V_{jk} = \braket{C \bm u_j, C\bm u_k} &= \sum_{y=-K}^K\sum_{z=-K}^K c_z^*c_y \braket{\bm b,Q^{y-z+k-j}\bm b},\\
q_j = \braket{\bm b, C\bm u_j} &= \sum_{y=-K}^K c_y \braket{\bm b, Q^{y+j}\bm b}.
\end{align}

To solve for the combination parameters $\alpha$, we define the auxiliary system matrix $W  \in \mathcal{M}_{4T+2}(\mathbb{R})$ and vector $r \in \mathbb{R}^{4T+2} $: 
\begin{equation}\label{eqAuxiliary}
W := \begin{pmatrix} \re\{V\} & - \im\{V\} \\ \im\{V\} & \re\{V\} \end{pmatrix}, \quad r := \begin{pmatrix}\re\{q\}\\\im\{q\}\end{pmatrix}.
\end{equation}
We can then recast the above problem into a convex optimization problem involving a real vector $z \in \mathbb{R}^{4T+2}$ with $z:= (\re \{\alpha\}, \im \{\alpha\})^{T}$. The problem becomes
\begin{equation}
    \min_{z\in \mathbb{R}^{4T+2}} z^T Wz - 2 r^T z + 1.
\end{equation}
Solving for $z$ gives us the real and imaginary parts of all coefficients in the set $\{\alpha_m\}_{m=-T}^T$. Recalling that our goal is to obtain an estimator $\tilde x$ of $x^*$, we can represent $\tilde x$ by combining the coefficients $\alpha$ and classical sketches of the vectors $\{\bm u_m = Q^m \bm b\}_{m=-T}^T$. Note that we do not obtain the explicit value of $\tilde x$. We will address such limitations at the end of this section.

\subsection{Limitations.}
As mentioned, \cref{algoMain} provides $\tilde x$ as a weighted combination of vectors that exist in the direct sum of the Krylov subspaces $\mathcal{K}_T (Q, \bm b)$ and $\mathcal{K}_T (Q^{-1}, \bm b)$ instead of an explicit vector. Such representations are useful for investigating certain properties of $x$, such as the expectation $x^\dagger M x$ for some operator $M$, but the actual vector cannot be retrieved without an exponential-cost post-processing procedure. Such limitations have also been found~\citep{aaronson2015read} in the HHL algorithm~\citep{harrow2009quantum}, which produces the solution as a quantum state $\ket x$, with the retrieval of the explicit vector $x$ requiring exponential cost as well.

For the output of \cref{algoMain}, note that we output both the coefficients $\{\alpha_m\}_{m=-T}^T$ and the corresponding Ansatz set $\{\bm u_m\}$ to form a representation of $\tilde x$. For the quantum algorithm, quantum circuits that prepare $\ket{u_m}$ are prepared, while for the quantum-inspired algorithm, data structures that provide sample and query access to vectors $\bm u_m$ are provided. We detail the preparation of these accesses in \cref{sectHybrid}.

Finally, \cref{propGuarantee} guarantees that the loss of the solution that \cref{algoMain} provides will converge if the truncation threshold $T\in\mathcal O(K\cdot\kappa_c\log\kappa_c/\varepsilon)$. Hence, for ill-conditioned banded circulant matrices where $\kappa_C \to \infty$, we do not provide an effective guarantee of convergence, which also occurs in classical Krylov subspace-based algorithms such as the conjugate gradient method~\citep{hestenes1952methods}. However, in practical scenarios, the algorithm may still converge; \cref{propGuarantee} only provides an upper bound for convergence guarantees. Convergence may still occur with truncation thresholds lower than the upper bound limit, a trait observed in our numerical results in \cref{sectExp}.

\section{Hybrid algorithm and quantum circuit implementation}~\label{sectHybrid}
We now elaborate on the implementation of \cref{algoMain} using quantum hardware. Without loss of generality, let $n = \lceil\log_2 N\rceil$. Given an $n$-qubit quantum system, to implement the algorithm, we require the following assumption:
\begin{assumption}\label{assumptionHardware}
Assume that there exists a hardware-efficient $n$-qubit quantum circuit evaluating unitary $U_b$ such that the normalized vector $\bm b$ can be implemented as a quantum state $\ket{b} = U_b\ket{0^n}$ with $\mathcal O (\poly n)$ circuit depth.
\end{assumption}

We note that this assumption of efficient preparations of  $\ket{b}$ may seem strong, but this is the nature of various linear system solvers, including CQS itself~\citep{huang2021term}, and fault-tolerant linear solvers such as HHL~\citep{harrow2009quantum}. In many cases of solving partial differential equations in a periodic boundary setting, however, the state $\ket{b}$ may admit a simple representation in discrete space, for example, a point charge in phase space. Further, preparation of Gaussian states, a common initial state for PDEs and circulant systems, has been shown to be polynomially scaling to the number of qubits~\citep{grover2002creating,kitaev2009wavefunction}. In other cases where there may be an analytic expression for the initial state, one can combine quantum arithmetic circuits with controlled rotations to prepare the state without full amplitude embedding, and reduce the cost of state preparation to $\mathcal O (\poly n)$. As the construction of our algorithm already requires the usage of QFT, the usage of quantum arithmetic circuits based on manipulations in Fourier space is no harder than the circuits we use to construct the $Q$ operator, which can also be viewed as a unitary adder. 

We implement steps 1-4 of \cref{algoMain} with quantum computers, obtaining estimations of each entry of $V^R$, $V^I$, $q^R$, and $q^I$ with the Hadamard test~\citep{cleve1998quantum}, which are then stored in classical memory. We modify quantum circuit implementations of shift operations shown by \citet{koch2021gate} and utilize QFT-based adder circuits~\citep{beauregard2003circuit, draper2000addition} to implement the cyclic permutation operator $Q$ and its powers with a depth independent of the power $m$, a contrast to the implementation of quantum circuits for solving general linear systems by~\citet{huang2021term}. Steps 5-8 of the algorithm remain implemented on classical computers. As shown in the following section, the powers of $Q$ can be implemented with a known quantum circuit with $\mathcal O(n^2)$ quantum gates. Therefore, we can construct classical sketches of the circuit implementation of output states $\ket{u_m} = Q^mU_b\ket{0^n}$ that correspond to the computed optimal parameters $\alpha_m$, using $\mathcal O(Tn^2)$ memory in total to provide a classical sketch of the estimator $\tilde x$ as a combination of quantum states. We now provide details of the circuit implementation as follows:

\subsection{Quantum circuit implementation.} Recall from \cref{sectPrelim} that the eigendecomposition of the cyclic permutation matrix $Q$ is given by $Q=F^{-1}\Lambda F$, where $\Lambda$ is the diagonal matrix with all eigenvalues of $Q$ and the diagonalization matrix $F$ is the matrix representation of DFT. On quantum systems, for cases where $n= \log_2(N)$, $F$ can be implemented with $\mathcal{O}(n^2)$ quantum logic gates using the quantum Fourier transform (QFT)~\citep{coppersmith1994approximate}. For the more general case where $n \ne \log_2 N$, one can use the arbitrary-size quantum Fourier transform proposed by \citet{kitaev1995quantum} which utilizes quantum phase estimation to construct DFT. For the remainder of the discussion, we assume $n= \log_2(N)$, while keeping in mind that cases where $n\ne \log_2(N)$ are still implementable. Matrix $\Lambda$ can be written as:
\begin{equation}
\Lambda = \operatorname{diag}\begin{pmatrix}
    \omega_N^0 & \omega_N^1 & \omega_N^2 & \cdots & \omega_N^{N-1}
\end{pmatrix}
\end{equation}
We use the phase gate $P$, whose matrix form is
\begin{equation}
    P(\theta) = \begin{pmatrix}
    1 & 0\\
    0& e^{i \theta}\\
\end{pmatrix}, 
\end{equation}
where the parameter $\theta$ is the rotation angle. We note that $\omega^0 = 1$. Hence $\Lambda$ can be written as the tensor product of $P(\theta)$ gates as follows:
\begin{equation}
\Lambda = 
\begin{pmatrix}
    1 & 0\\
    0&\omega^{N/2}_N\\
\end{pmatrix}
\otimes\cdots\otimes
\begin{pmatrix}
    1 & 0\\
    0&\omega^{2}_N\\
\end{pmatrix}\otimes
\begin{pmatrix}
    1 & 0\\
    0&\omega_N\\
\end{pmatrix}
=
\bigotimes_{j = 0}^{n - 1} P(\theta_j), \text{ where }
\theta_j =\frac{2 \pi}{N} \cdot 2^j.
\end{equation}
The implementation of $\Lambda$ is then as illustrated in \cref{figIncrementor}. We note that this structure is similar to the quantum adder proposed by~\citet{beauregard2003circuit} as a non-controlled version of Draper's adder circuit~\citep{draper2000addition}. The number of gates for this implementation is equal to the qubit number $n$.

We note that for $m\in \mathbb{Z}$, $Q^m$ has eigendecomposition $F^{-1}\Lambda^mF$. Further note that the powers of phase operations are phase operations with multiplied rotation angles, that is, $P(\theta)^m = P(m\theta)$. As $\Lambda$ can be decomposed into a tensor product of phase gates, we can simply implement $\Lambda^m$ with the same gate, but with the angles multiplied by $m$, as shown in \cref{figIncrementorPower}. Observe that the implemented quantum circuit is independent of $m$ and only depends on the depth of the QFT implementation.

\begin{figure}
    \centering
    \subfloat[\label{figIncrementor}]{
        \includegraphics[height=8em]{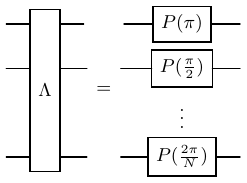}
    }
        \hfill
    \subfloat[\label{figIncrementorPower}]{
        \includegraphics[height=8em]{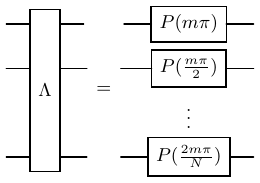}
    }
        \hfill
    \subfloat[\label{figCompleteRoutineHadamard}]{
        \includegraphics[height=8em]{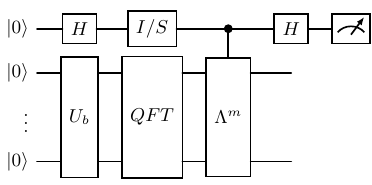}
    }
    \caption{Circuit implementation of quantum subroutines in our algorithm. The cyclic permutation matrix $Q$ can be diagonalized by the DFT matrix $F$ such that $Q = F^{-1}\Lambda F$. DFT can be implemented by QFT circuits. The diagonal matrix $\Lambda$ can be implemented as a tensor product of phase gates as shown in \cref{figIncrementor}. Powers of $\Lambda$ can be implemented with the same phase gates, by multiplying rotation angles as shown in \cref{figIncrementorPower}. To retrieve the real and imaginary parts of $\braket{b|Q^m|b}$, we use the Hadamard test (on the state $\mathsf{QFT}\ket{b}$), which requires a controlled version of $\Lambda^m$, i.e. controlled phase gates, as shown in \cref{figCompleteRoutineHadamard}. }
    \label{figAdder}
\end{figure}

Given the above implementation of powers of $Q$, to obtain the expectation value of $\braket{b|Q^m|b}$, we prepare a quantum state $\ket{\psi_b} = {\sf QFT} \ket{b}$. A Hadamard test is then performed on $\Lambda^m$ to calculate the expectation values with respect to $\ket{\psi_b}$ as described in \cref{appendixQroutines}, such that we obtain the values of $\re\braket{\psi_b|\Lambda^m|\psi_b}$ and $\im\braket{\psi_b|\Lambda^m|\psi_b}$, which are equivalent to $\re\braket{b|Q^m|b}$ and $\im\braket{b|Q^m|b}$. The full quantum circuit is shown in \cref{figCompleteRoutineHadamard}. 

\subsection{Number of total measurements.} As we conduct quantum measurements to obtain auxiliary system matrices $W$ and $r$, we do not gain the exact value of the entries in the matrices. Instead, what we obtain are estimations of the values whose accuracy depends on the number of measurements conducted. The total number of measurements required to achieve a close enough estimate of $\tilde x$ is stated in the following proposition:
\begin{proposition}[Number of measurements needed] \label{propMeasurement}
Given $K$-banded circulant matrix $C = \sum_{\ell=-K}^K c_\ell Q^\ell \in \mathcal M_{N}(\mathbb{C})$, hardware efficient implementation of $U_b$ as shown in \cref{assumptionHardware} and truncation threshold $T$. This defines auxiliary systems $W$ and $r$ in \cref{eqAuxiliary}. Let the sum of absolute values of the coefficients of decomposed $C$ be $B = \sum_{\ell=-K}^{K} |c_\ell|$. We can find $\tilde \alpha: \{\tilde\alpha_m \in \mathbb C, \forall m\in[-T.. T]\}$ for estimator $\tilde{x}(\alpha) = \sum_{m=-T}^T \alpha_{m}Q^m U_b\ket{0^n}$ such that the following is satisfied
\begin{equation*}
\|C\tilde x (\tilde \alpha) - \ket{b}\|^2 \le \min_{\alpha \in \mathbb C^{2T+1}} \|C\tilde x (\alpha) - \ket{b}\|^2+ \varepsilon
\end{equation*}
using $\mathcal O (B^4(K+T)^2\|W\|\|W^{-1}\|^2(1+\|W^{-1}r\|^2/\varepsilon)$ measurements via the Hadamard test.
\end{proposition}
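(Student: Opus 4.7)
I'd translate the measurement budget into a precision $\delta$ for each Hadamard-test estimate of the overlaps $p_m := \langle b | Q^m | b\rangle$, then run a perturbation analysis on the quadratic program of Equation~\ref{eqSolve}. Step one: by Hoeffding's inequality applied to the $\pm 1$ outcomes of the circuit in Figure~\ref{figCompleteRoutineHadamard}, each $p_m$ can be estimated to additive error $\delta$ (in both real and imaginary parts) using $\mathcal O(\delta^{-2}\log(1/\gamma))$ shots with failure probability $\gamma$, and since only $\mathcal O(T+K)$ distinct $m$'s appear across the four matrices of Algorithm~\ref{algoMain}, a union bound lets us condition on all estimates being within $\delta$ simultaneously.

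Step two: propagate this precision to $W$ and $r$. Each entry $V_{jk}$ (resp.\ $q_j$) is a linear combination of at most $(2K+1)^2$ (resp.\ $2K+1$) of the $p_m$'s with total coefficient mass at most $B^2$ (resp.\ $B$), so entrywise $|\tilde V_{jk} - V_{jk}| \le B^2\delta$ and $|\tilde q_j - q_j| \le B\delta$. Crucially, $V_{jk}$ depends on $j,k$ only through $k-j$, so $\tilde V - V$ is a Toeplitz matrix of bandwidth $2T$, which, combined with the block form in Equation~\ref{eqAuxiliary}, yields spectral-norm bounds $\|\tilde W - W\| = \mathcal O(T B^2 \delta)$ and $\|\tilde r - r\| = \mathcal O(\sqrt{T}\,B\delta)$.

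Step three: quadratic-program sensitivity. With $z^* = W^{-1} r$ and $\tilde z = \tilde W^{-1}\tilde r$, the identity
\begin{equation*}
\tilde z - z^* = \tilde W^{-1}(\tilde r - r) - \tilde W^{-1}(\tilde W - W)\, W^{-1} r,
\end{equation*}
combined with the Neumann-series estimate $\|\tilde W^{-1}\| \le 2\|W^{-1}\|$ (valid once $\|\tilde W - W\| \le 1/(2\|W^{-1}\|)$), yields $\|\tilde z - z^*\| \le 2\|W^{-1}\|(\|\tilde r - r\| + \|\tilde W - W\|\cdot\|W^{-1} r\|)$. Since $\nabla f(z^*) = 0$ for the quadratic $f$ in Equation~\ref{eqSolve}, the excess loss reduces to the clean quadratic form $(\tilde z - z^*)^\intercal W (\tilde z - z^*) \le \|W\|\,\|\tilde z - z^*\|^2$.

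Step four: demand this excess be at most $\varepsilon$. This produces two constraints on $\delta^2$, one from each term in the bound on $\|\tilde z - z^*\|$; taking the worse and multiplying $1/\delta^2$ by the $\mathcal O(T+K)$ distinct inner products yields the claimed count. The main obstacle will be bookkeeping of the $B$ and $T$ powers: one must allocate the error budget carefully between the $\tilde r$-term and the $\tilde W$-term, and exploit the Toeplitz structure (rather than the looser Frobenius-from-entrywise bound) to recover the grouping $B^4(K+T)T\|W\|\|W^{-1}\|^2(1 + \|W^{-1}r\|^2/\varepsilon)$ without picking up a spurious extra $T$ or $B^2$ factor.
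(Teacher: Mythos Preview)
Your approach is essentially the same as the paper's, only more explicit. The paper's proof of Proposition~\ref{propMeasurement} is extremely terse: it (i) observes that only $4K+4T+1\in\mathcal O(K+T)$ distinct overlaps $\langle b|Q^m|b\rangle$ are needed, (ii) notes that $\mathcal O(B^4R)$ shots per overlap yield entrywise variance $\mathcal O(1/R)$ in $W$ and $r$, and (iii) invokes Proposition~12 of \cite{huang2021term} as a black box to conclude that $R\ge CT\|W\|\|W^{-1}\|^2(1+\|W^{-1}r\|^2/\varepsilon)$ suffices. Your Steps~1--2 reproduce (i)--(ii), and your Steps~3--4 are an attempt to unpack (iii) into a self-contained quadratic-perturbation argument rather than citing the external lemma.

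One remark on the bookkeeping concern you already flag: with the deterministic bound $\|\tilde W-W\|=\mathcal O(TB^2\delta)$, the excess loss $\|W\|\,\|\tilde z-z^*\|^2$ picks up a term of order $\|W\|\|W^{-1}\|^2T^2B^4\delta^2\|W^{-1}r\|^2$, so solving for $1/\delta^2$ and multiplying by the $\mathcal O(K+T)$ overlaps naively produces an extra factor (roughly $T$) inside the parenthesis compared with the stated $(1+\|W^{-1}r\|^2/\varepsilon)$. The Toeplitz observation, while correct and specific to the circulant setting, does not by itself close this gap: a $(2T{+}1)\times(2T{+}1)$ Toeplitz matrix with entries bounded by $\eta$ still has spectral norm $\mathcal O(T\eta)$ in the worst case. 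So to match the exact exponents you will either need to import whatever sharper (possibly probabilistic) argument underlies Proposition~12 of \cite{huang2021term}, or accept a bound that differs from the paper's by a polynomial factor in $T$. The overall strategy is sound; only this last quantitative reconciliation is outstanding.
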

\begin{proof}
We note that the number of different expectation values needed is $4K+4T+1 \in \mathcal O(K+T)$ due to the high overlap of values required. However, as the values of each entry are no longer independent, we cannot give a bound on the errors of each entry according to Proposition 12 of~\citep{huang2021term}. Suppose the additive error for each expectation value estimated produced by the Hadamard test is $\varepsilon_H$, which requires $\mathcal{O} (1/\varepsilon_H^2)$ repetitions. Then from \cref{propMatrixNorm}, the measurement errors are upper bounded by $\varepsilon_W = \|\hat{W}-W\|\le\tilde{\ca{O}}(\sqrt{(K+T)}\cdot B^2\varepsilon_H)$ and $\varepsilon_r = \|\hat{r}-r\|\le\ca{O}(\sqrt{(K+T)}\cdot B\varepsilon_H)$ from $\mathcal{O} ((K+T)/\varepsilon_H^2)$ measurements. Further, from \cref{propGuaranteesubspace}, we have $1/\varepsilon_W^2 > C \|W\| \|W^{-1}\|^2 \|z^*\|^2 / \varepsilon$ and $1/\varepsilon_r^2 > C \|W\| \|W^{-1}\|^2/ \varepsilon$. Plugging the bounds, we obtain $\frac{1}{\varepsilon_H^2} \in \tilde{\ca{O}} (B^4(K+T)\|W\|\|W^{-1}\|^2\|W^{-1}r\|^2/\varepsilon)$. Hence, the number of measurements required in total is then
\begin{equation}
\tilde{\ca{O}}\left(\frac{B^4(K+T)^2\|W\|\|W^{-1}\|^2\|W^{-1}r\|^2}{\varepsilon}\right).
\end{equation}
\end{proof}
Based on discussions given by \citet{huang2021term} on the upper bounds of $\|W\|$, $\|W^{-1}\|$, and $\|W^{-1}r\|$, we can see that the number of measurements in the proposition can be expressed as
\begin{equation}
\tilde{\ca{O}}\left(\frac{(K+T)^2B^4\kappa_C^2}{\varepsilon^5}\right).
\end{equation}
Further, in order to achieve an $\varepsilon$-close solution to the optimal $x^*$ where $x^* = \argmin_{x\in \mathbb C^N}\|Cx-b\|^2$, we note by \cref{propGuarantee} that $T\in O(K\cdot \kappa_C\log\frac{\kappa_C}{\varepsilon})$. Hence, the total number of measurements needed is 
\begin{equation}
   \tilde{\ca{O}}\left(\frac{K^2B^4\kappa_C^4}{\varepsilon^5}\right).
\end{equation}
To provide potential utility in terms of the number of runtimes, we would expect $K, B, \kappa_C \in \poly\log(N)$. Given that banded systems of interest are usually of constant size, e.g. tri-diagonal or penta-diagonal systems, we see that $K\in\mathcal{O}(1)$ as well as $B\in\mathcal{O}(K)\subseteq\mathcal{O}(1)$ in such cases. The restriction of well-conditioned systems on the conditional number $\kappa_C$ is a common assumption for efficient linear system solvers, both classical and quantum, as there is usually a polynomial dependency on $\kappa_C$ on linear system solvers. That being said, identifying well-conditioned linear systems that would bring quantum advantage remains a topic that requires further exploration in quantum algorithms related to linear system solving.

\section{Numerical results}\label{sectExp}
In this section, we consider the applications of solving banded circulant linear systems in finding solutions to partial differential equations (PDEs) with periodic boundaries. We consider the heat transfer problem in particular and obtain numerical results in solving the corresponding circulant linear systems.

\subsection{One-dimensional heat transfer problem.}
Solving heat transfer problems with quantum methods has been previously researched by \citet{guseynov2023depth}. They tackle the problem by applying QFT and using the variational quantum linear solver (VQLS)~\citep{bravo_prieto2020variational} for optimization in the Fourier domain. In addition, they also propose a CQS-Ansatz tree-based approach, which first converts the circulant matrix into the Fourier domain using QFT and then decomposes the Fourier representation into Pauli matrices. Their method requires gradient heuristics for potential speedups, as the inclusion of all Pauli matrices in the Ansatz tree to provide theoretical guarantees results in an exponential number of terms in relation to the qubit count. In comparison, we employ \cref{algoMain} for circulant linear systems with either the Hadamard test shown in \cref{figCompleteRoutineHadamard}, which uses QFT as diagonalization.

For simplicity,  we consider only the one-dimensional heat transfer problem. We denote the spatial variable by $x \in \mathbb R$ and the thermal diffusivity by $a \in \mathbb{C}$. The heat conduction equation is then given by
\begin{equation}
\label{heatEq}
a^2 \bm{\Delta} \Phi (x, t) - \frac{\partial \Phi(x, t)}{\partial t} = f(x, t),
\end{equation}
where $\bm{\Delta} = \nabla^2$ is the Laplacian operator, $\Phi(x, t)$ and $f(x, t)$ are the temperature and a known heat source at the grid position of $x$ at time $t$, respectively. Consider an initial time constraint and a $x_R \in \mathbb{R}$ periodic spatial boundary condition,
\begin{equation}
\label{heatBound}
    \Phi(x, 0) = \phi(x); \quad \Phi(x, t) = \Phi(x + x_R, t),
\end{equation}
where $\phi(x)$ is the initial distribution function. With the finite difference method~\citep{ozicsik2017finite}, we can obtain a numerical solution to the heat transfer problem by solving the corresponding linear systems of equations in the form of $Cx=\bm{b}$, where $x$ is the temperature distribution and $\bm{b}$ is a known vector~\citep{guseynov2023depth}. In this scenario, the matrix $C$ is given by
\begin{equation}
\label{heatLinear} C=
    \begin{pmatrix}
        -2 - \xi & 1 & 0 & \cdots & 0 & 1 \\ 
        1 & -2 - \xi & 1 & \ddots & \ddots & 0 \\ 
        0 & 1 &\ddots  & \ddots &\ddots  & \vdots \\
        \vdots & \ddots & \ddots & \ddots &  \ddots & 0\\
        0 & \ddots &\ddots  & \ddots & \ddots & 1 \\
        1 & 0 & \cdots & 0 & 1 & -2 - \xi \\
    \end{pmatrix},
\end{equation}
where $\xi \in \mathbb{R}$ is a grid parameter. As $C$ is a circulant matrix, it has the form of a linear combination of cyclic permutation matrices $Q$ and $Q^{-1}$, in addition to the identity matrix,
\begin{equation}
\label{heat_C}
C = (-2 - \xi) \mathbb{I} + Q + Q^{-1}.
\end{equation} 
The condition number of the matrix in \cref{heatLinear} is shown to be $\kappa_{C} = \frac{\xi + 4}{\xi}$ by \citet{guseynov2023depth}. Note that when $\xi \rightarrow 0$, the condition number $\kappa_{C} \rightarrow + \infty$. 

\begin{figure}
\centering
    \includegraphics[width=\onecolumnfigurewidth]{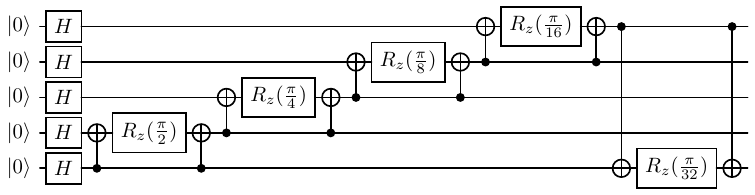}
    \caption{State preparation circuit of $\ket{b}$ modeled after the QAOA circuit~\citep{farhi2014quantum, hadfield2019quantum}.}
    \label{figQaoa}
\end{figure}

\begin{figure}
\centering
\subfloat[\label{figSimulationresults}]{
\includegraphics[width=\threecolumnfigurewidth,trim=85 25 85 60]{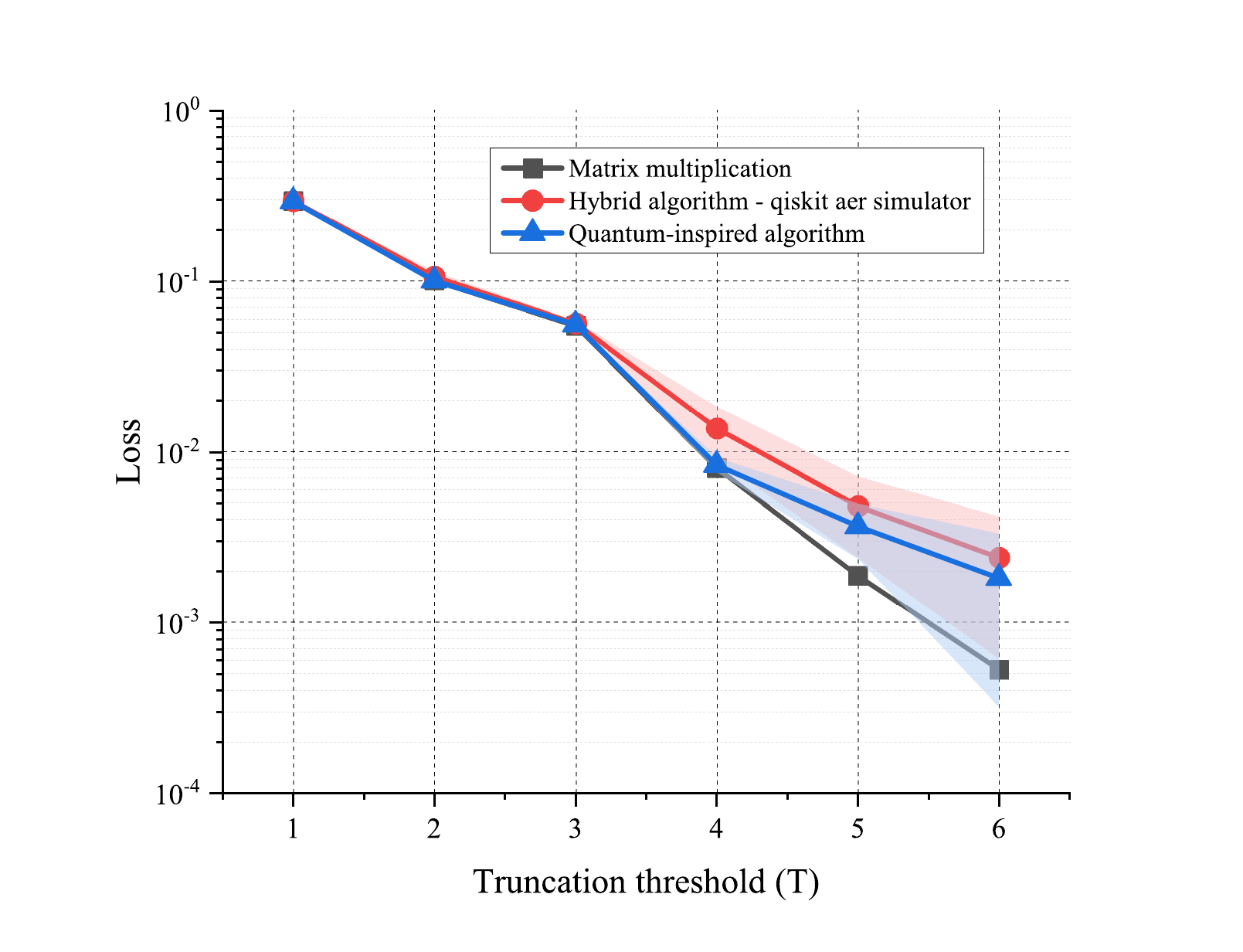}
}
\subfloat[\label{figQuantumresult}]{
\includegraphics[width=\threecolumnfigurewidth,trim=85 25 85 60]{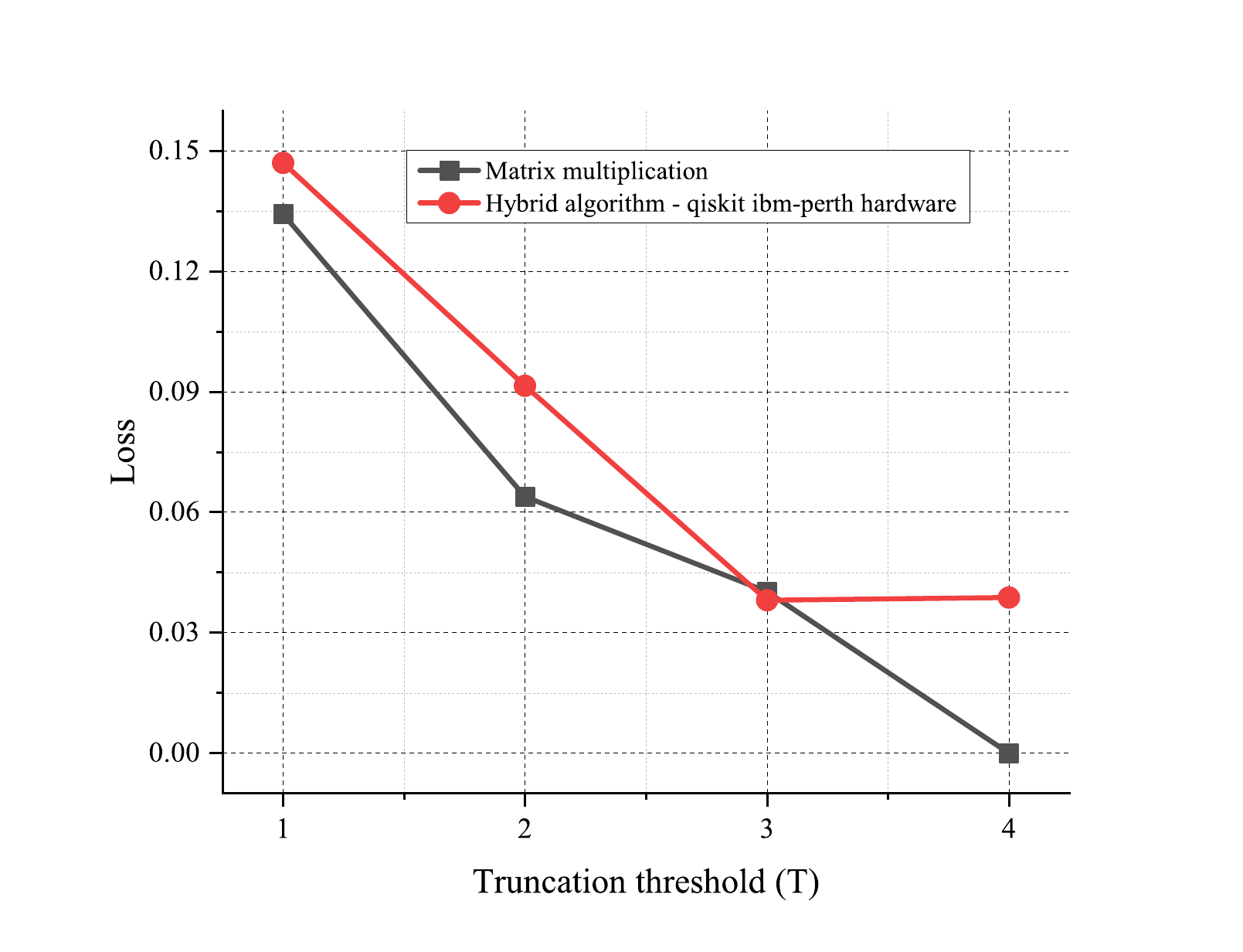}
}
\subfloat[\label{figCondition}]{
\includegraphics[width=\threecolumnfigurewidth,trim=85 25 85 60]{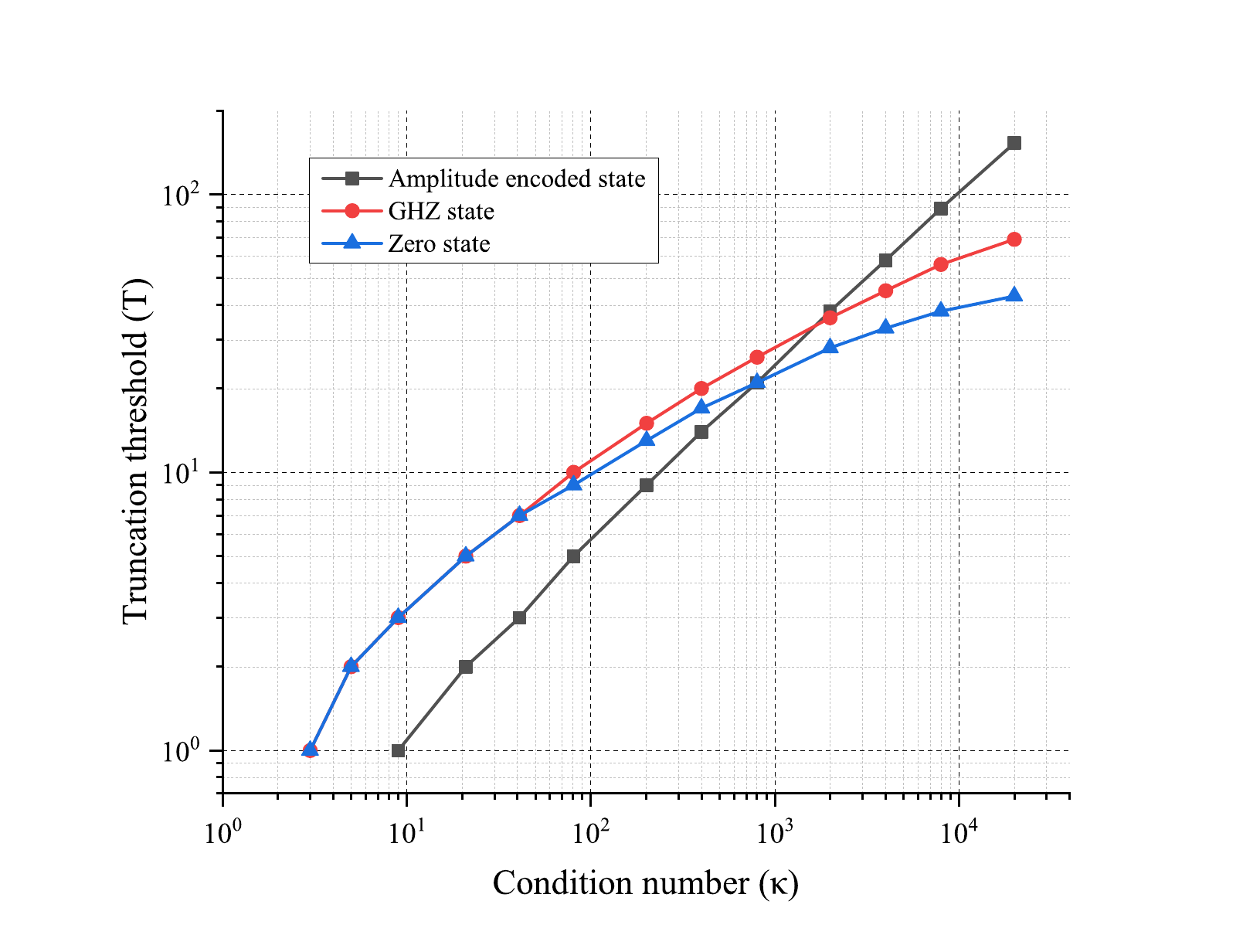}
}
\caption{Numerical results of the algorithm proposed in our paper for solving heat conduction linear systems. First, we show that simulation results of our hybrid and quantum-inspired algorithms can provide decent approximations to the results obtained by matrix multiplication in \cref{figSimulationresults}, with errors noticeable only in the logarithmic scale. Further, our hybrid algorithm is executable on IBM quantum hardware as shown in \cref{figQuantumresult}. Lastly, \cref{figCondition} shows that truncation thresholds fall within our provided upper bound in \cref{propGuarantee} in practice.}
\label{figNumerical}
\end{figure}

\subsection{Simulations and quantum hardware executions.} To demonstrate our algorithm in practice, we perform three numerical experiments, the results of which are summarized in \cref{figNumerical}. In this section, we also demonstrate the performance of an implementation of a quantum-inspired version of the algorithm, which replaces the usage of a Hadamard test with assumed sample and query access of the vector $\bm b$ and its permutations $Q^m\bm b$ per dequantization strategies discussed by \citet{tang2019quantum, tang2021quantum}, which we discuss further in \cref{sectSample}.

We first examine the performance of our algorithm on simulations of solving the heat transfer problem. We set the grid parameter $\xi=0.2$ and the system size $N=2^5$. The input state $\bm{b}$ is prepared by a single-layer QAOA circuit~\citep{farhi2014quantum, hadfield2019quantum} with the rotation parameters of $\{\frac{\pi}{2}, \frac{\pi}{4}, \frac{\pi}{8}, \frac{\pi}{16}, \frac{\pi}{32}\}$ as shown in \cref{figQaoa}. Note that we do not tune the parameters of the circuit, but merely use its structure to prepare our quantum state. We solve the corresponding circulant linear system of equations using three methods---direct matrix multiplication, simulation of the hybrid algorithm via the \texttt{qiskit-aer} simulator with $6\times10^4$ shots per Hadamard test, and simulation of the quantum-inspired algorithm with $6\times10^4$ accesses per sampling. The results are shown in \cref{figSimulationresults}. We note that our algorithms can provide a close estimate to that of the true value obtained by matrix multiplication, with errors noticeable only on the logarithmic scale.

Testing the performance of the hybrid algorithm on real quantum devices, we execute our hybrid algorithm on \texttt{ibm-perth} hardware. Due to the high consumption of quantum resources of the QFT subroutine on current quantum devices, we consider a smaller size of $N=2^3$ and set $\bm{b}$ as $\ket{0}^{\otimes 3}$. We set the shot number as $10^6$. We find that on small systems, our algorithms can still produce results close to the values obtained by the matrix multiplication, with the loss error not exceeding 0.05, as shown in \cref{figQuantumresult}.

Lastly, we compare the condition number with the truncation threshold under different input states $\ket{b}$. Setting the system size of $N=2^{10}$, we evaluate on three different types of $\bm{b}$, the zero state $\ket{0^{10}}$, the GHZ state $\ket{0^{10}}+\ket{1^{10}} / \sqrt{2}$ and the amplitude encoded state $\sum_{k=0}^{1023}k\ket {k} / \sqrt{\sum_{k=0}^{1023} k^2}$. We tune the parameter $\xi$ to adjust the condition number $\kappa_C$. Fixing a loss threshold $\upsilon=10^{-2}$, we record the minimum truncation threshold $T$ that achieves an MSE loss less than $\upsilon$ as shown in \cref{figCondition}. We note that from the slope of the log-log plot, we can obtain the exponent of $\kappa_C$ when the truncation threshold $T$ is modeled as a polynomial function of $\kappa_C$. We note that for the zero and GHZ state, the exponent seems to converge to a small value, suggesting the $T \in o(\kappa_C)$. Further, for the amplitude encoded state, we see that the slope is approximately $\frac{2}{3}$, indicating that $T \propto \kappa_c^{2/3}$. We note that our results all verify that the upper bound $T \in O(\kappa_C \log \kappa_C)$ provided by \cref{propGuarantee}; however, we have not found a vector $\bm b$ that saturates this upper bound, hence providing no empirical suggestions of the tightness of the proposition.

\section{Conclusion}
This work explores the CQS approach proposed by \citet{huang2021term} for solving circulant linear systems. As banded circulant systems can be expanded as a matrix polynomial of the cyclic permutation matrix $Q$, we do not require the iteratively growing Ansatz trees using heuristical gradient expansion strategies. We instead utilize the properties of circulant systems and only explore the direct sum of two Krylov subspaces $\mathcal K_T (Q, \bm b)$ and $\mathcal K_T (Q^{-1}, \bm b)$ to obtain the solution. 
We make use of efficient implementations of the powers of the cyclic permutation matrix $Q$ due to its eigendecomposition by QFT, as well as the availability of a tensor product decomposition of the diagonalized matrix $\Lambda$. The simplicity of the circulant problem gives rise to a classical algorithm that can be inferred from the theoretical guarantee of the linear-systems solver. 
We execute our algorithm in simulations as well as quantum hardware to solve partial differential equations produced by discretizing heat transfer problems, showing that these algorithms can indeed be utilized to produce approximate solutions that are close to the optimal solution obtained by exact algorithms that require a much higher runtime. Hence, this study shows the CQS method for analytically solvable cases and provides insights into more general applications of CQS when such analytical solutions are not available. 

While the work in our paper discusses the simple case of circulant systems, where there is an equivalence to a collapse of the Ansatz tree into a classically tractable set of operators, we believe that this study may provide insights into narrowing Ansatz tree exploration by studying cases where the structure of the problem may provide partial collapses of the Ansatz tree. For example, the closely related problem of PDEs with Dirichlet or Neumann boundary conditions can be viewed as a banded matrix system, and formulated as a sum of circulant systems and multiple rank-1 matrices. Such systems would thus produce deconstructions where the Ansatz tree constructed can be partially collapsed, enhancing the searchability of the Ansatz space, while preventing being classically tractable.

\begin{acknowledgments}
The authors would like to thank Bin Cheng for the valuable discussions and suggestions. This work is supported by the National Research Foundation, Singapore, and A*STAR under its CQT Bridging Grant and its Quantum Engineering Programme under grant NRF2021-QEP2-02-P05. XL and KK acknowledge support from the CQT Graduate Scholarship. KK acknowledges support from Leong Chuan Kwek, under project grant R-710-000-007-135. 
\end{acknowledgments}

\appendix
\counterwithin{theorem}{section}

\section{Additional proofs}\label{appendixGuarantee}
\subsection{Convergence guarantee}
In this section, we show the proof of the convergence guarantee of \cref{propGuarantee}.

\begin{proof}[Proof for \cref{propGuarantee}]
Our proof is divided into two parts; we first discuss the case where $C$ is Hermitian and then generalize the results to a non-Hermitian $C$.

{\it $C$ is Hermitian:} Discussing the case where $C$ is Hermitian, we first let $\hat C = C / \|C\|$. We find that the singular values of $\hat C$ are equal to the absolute value of the eigenvalues of $\hat C$, that is, $\{\sigma_k(\hat C)\}_{k\in[N]} = \{|\lambda_k(\hat C)|\}_{k\in[N]}$. Further notice that $\|\hat C\| = 1$ and $\kappa_{\hat C} = \kappa_C$. Noting that the eigenvalues of $ \hat C$ are all real, we can see that all eigenvalues of $\hat C$ fall within the domain $\mathcal{D}_{\kappa_C} = [-1, -\frac{1}{\kappa_C}] \cup [\frac{1}{\kappa_C}, 1]$, as limited by the largest and smallest singular values. Similar to the proof of proposition 4 of~\citep{huang2021term}, we find that by Lemma 14 of~\citep{childs2017quantum}, there exists a set of coefficients $j\in [0, j_0]$ such that $f(x) = \sum_{j=0}^{j_0} p_j x^j$ where $j_0 = \sqrt{\beta\log\frac{4\beta}{\nu}}$ and $\beta=\kappa_C^2\log\frac{\kappa_C}{\nu}$, such that $f(x)$ is 2$\nu$-close to $x^{-1}$ on domain $\mathcal{D}_{\kappa_C}$. We can obtain by diagonalization of powers of $\hat C$ that 
\begin{equation}
\left\|f(\hat C) - \hat C^{-1}\right\| = \max_{i \in [N]} \sigma_i\left(f(\hat C) - \hat C^{-1}\right) = \max_{i \in [N]} \left|\lambda_i(f(\hat C) - \hat C^{-1})\right| = \max_{i \in [N]} \left|f(\lambda_i(\hat C)) - \lambda_i(\hat C)^{-1}\right| \le 2 \nu.
\end{equation}
Let $g(x) = \sum_{j=0}^{j_0} \frac{p_j}{\|C\|^{j+1}} x^j$. Note that
\begin{equation}\label{eqepsilonClose}
    \|g(C) - C^{-1}\| = \left\|\sum_{j=0}^{j_0} \frac{p_j}{\|C\|^{j+1}} C^j - C^{-1}\right\| = \left\|\sum_{j=0}^{j_0} \frac{p_j}{\|C\|} \hat C^j - \frac{1}{\|C\|} \hat C^{-1}\right\| = \frac{\|f(\hat C) - \hat C^{-1}\|}{\|C\|} \le \frac{2\nu}{\|C\|}
\end{equation}
 Recall that $C = \sum_{\ell=-K}^K c_\ell Q^\ell$. Let $T = j_0K \in O(K\cdot\kappa_C\log\frac{\kappa_C}{\nu})$. We can find that there exists a set of parameters $\{\hat\alpha_i: i \in [-T.. T], \hat\alpha_i\in \mathbb{C}\}$ such that $g(C) = \sum_{j=0}^{j_0} \frac{p_j}{\|C\|^{j+1}} C^j = \sum_{m=-T}^{T} \hat{\alpha}_mQ^m$. We then find that
\begin{equation}
\min_{\{\alpha_m\}_{m\in [-T.. T]}} \left\|C\left(\sum_{m=-T}^{T} \alpha_mQ^m\right)\bm b - \bm b\right\|^2 \le \left\|C\left(\sum_{m=-T}^{T} \hat\alpha_mQ^m\right)\bm b - \bm b\right\|^2 = \|Cg(C)\bm b - \bm b\|^2
\end{equation}
We now observe the following:
\begin{align}
\|Cg(C)\bm b - \bm b\|^2 &= \|(C(g(C)-C^{-1})\bm b) + (CC^{-1}\bm b - \bm b)\|^2\\
&= \|C(g(C)-C^{-1})\bm b\|^2 + \|CC^{-1}\bm b - \bm b\|^2 + 2\re\braket{C(g(C)-C^{-1})\bm b, CC^{-1}\bm b - \bm b}
\end{align}
Given that 
\begin{align}
\braket{C(g(C)-C^{-1})\bm b, CC^{-1}\bm b - \bm b} &= \braket{(g(C)-C^{-1})\bm b, C^\dagger CC^{-1}\bm b - C^\dagger\bm b} \\
&= \braket{(g(C)-C^{-1})\bm b, C^\dagger\bm b - C^\dagger\bm b} = 0,
\end{align}
we have
\begin{equation}
\|Cg(C)\bm b - \bm b\|^2 = \|CC^{-1}\bm b - \bm b\|^2 + \|C(g(C)-C^{-1})\bm b\|^2.
\end{equation}
Observe that the first term is the MSE loss of the least-squares solution of $x=C^{-1}b$ and can hence be rewritten as $\min_{x\in \mathbb{C}^{N}}\|Cx-\bm b\|^2$.

The spectral norm of matrix $A$ can be defined by vector 2-norms such that
\begin{equation}
    \|A\| = \sup_{x\ne 0}\frac{\|Ax\|}{\|x\|},
\end{equation}
Hence, one can obtain an inequality
\begin{equation}
\|Ax\| \le \|A\|\|x\|.
\end{equation}
We then see that the second term $\|C(g(C)-C^{-1})\bm b\|^2$ can be upper bounded as follows:
\begin{equation}
\|C(g(C)-C^{-1})\bm b\|^2 \le \|C\|^2\|(g(C)-C^{-1})\bm b\|^2 \le \|C\|^2\|g(C)-C^{-1}\|^2\|\bm b\|^2 = \|C\|^2\frac{4\nu^2}{\|C\|^2}
\le 4\nu,
\end{equation}
with the last inequality using the fact that $0 < \nu \le 1$
Hence, we note that the solution obtained by our truncation threshold $T\in O(K\cdot\kappa_C\log\frac{\kappa_C}{\nu})$ has 
\begin{equation}
\min_{\{\alpha_m\}_{m\in [-T.. T]}} \left\|C\left(\sum_{m=-T}^{T} \alpha_mQ^m\right)\bm b - \bm b\right\|^2 \le \min_{x\in \mathbb{C}^{N}}\|Cx-\bm b\|^2 + 4\nu
\end{equation}

{\it $C$ is non-Hermitian:} Consider the following transformation and embedding matrices:
\begin{equation}
C' = \begin{pmatrix} O & C\\ C^\dagger & O \end{pmatrix}\quad {\bm b'} = \begin{pmatrix} \bm b \\ 0 \end{pmatrix} \quad x' = \begin{pmatrix} 0 \\ x \end{pmatrix},
\end{equation}
where $O$ is the zero matrix. Through the embedding, we can transform our linear system into solving $C'x'={\bm b'}$ where $C'$ is now Hermitian. We know that the eigenvalues of $C'$ correspond to the set $\{\pm \sigma_k(C)\}_{\forall k \in [N]}$~\citep{harrow2009quantum}. Hence, the singular values of $C'$ are equal to the set of singular values of $C$. Thus we have $\kappa_{C'} = \kappa_C$ and $\|C'\| = \|C\|$. Constructing $\hat C' = \frac{C'}{\|C\|}$, we see all eigenvalues of $\hat C'$ fall within the domain $\mathcal{D}_{\kappa_C}= [-1, -\frac{1}{\kappa_C}] \cup [\frac{1}{\kappa_C}, 1]$. 

By Lemma 14 of \citep{childs2017quantum} and \cref{eqepsilonClose}, we see that exists $g(x) = \sum_{j=0}^{j_0} \frac{p_j}{\|C\|^{j+1}} x^j$ such that 
\begin{equation}
\|g(C') - C'^{-1}\| \le \frac{2\nu}{\|C\|},
\end{equation}
To find a decomposition of $C'$, we first find the decomposition of $C^\dagger$. Using the fact that $Q$ is unitary, thus the adjoint/transpose of $Q^m$ is $Q^{-m}$:
\begin{equation}
    C^\dagger = \left(\sum_{\ell = -K}^K c_\ell Q^\ell\right)^\dagger = \sum_{\ell = -K}^K \bar c_\ell Q^{-\ell} = \sum_{\ell = -K}^K \bar c_{-\ell} Q^\ell,
\end{equation}
where $\bar c_\ell$ is the conjugate of $c_\ell$. We can then express $C'$ as follows 
\begin{equation}
C' = \begin{pmatrix} O & C\\ C^\dagger & O \end{pmatrix} = \begin{pmatrix} O & \sum\limits_{\ell = -K}^K c_{\ell} Q^\ell\\ \sum\limits_{\ell = -K}^K \bar c_{-\ell} Q^\ell & O \end{pmatrix} = \sum_{\ell=-K}^K \begin{pmatrix} 0 & c_\ell\\ \bar c_{-\ell} & 0 \end{pmatrix} \otimes Q^\ell
\end{equation}
One can find that for $T\in O(K\cdot\kappa\log\frac{\kappa}{\nu})$, there exists a set of parameters such that $\{\hat\alpha_m, \hat\beta_m, \hat\gamma_m, \hat\delta_m: m \in [-T..T], \hat\alpha_i, \hat\beta_m, \hat\gamma_m, \hat\delta_m \in \mathbb{C}\}$ such that we can write $g(C')$ in the following form:
\begin{align}
g(C') = \sum_{j=0}^{j_0} \frac{p_j}{\|C\|^{j+1}} C'^j &= \sum_{j=0}^{j_0} \frac{p_j}{\|C\|^{j+1}} \left(\sum_{\ell=-K}^{K}\begin{pmatrix} 0 & c_\ell\\ \bar c_{-\ell} & 0 \end{pmatrix} \otimes Q^\ell\right)^j\\
&= \sum_{m=-T}^T \begin{pmatrix} \hat\beta_i & \hat\delta_m\\ \hat\alpha_m & \hat\gamma_m \end{pmatrix} \otimes Q^m = \begin{pmatrix}
    \sum\limits_{m=-T}^T \hat\beta_m Q^m & \sum\limits_{m=-T}^T \hat\delta_m Q^m \\
    \sum\limits_{m=-T}^T \hat\alpha_m Q^m & \sum\limits_{m=-T}^T \hat\gamma_m Q^m
\end{pmatrix}
\end{align}
Furthermore, observe that the following equality is true:
\begin{align}
C'g(C'){\bm b'} &= \begin{pmatrix} O & C\\ C^\dagger & O \end{pmatrix}\begin{pmatrix} \sum\limits_{m=-T}^T \hat\beta_m Q^m & \sum\limits_{m=-T}^T \hat\delta_m Q^m \\ \sum\limits_{m=-T}^T \hat\alpha_m Q^m & \sum\limits_{m=-T}^T \hat\gamma_m Q^m \end{pmatrix}\begin{pmatrix}\bm b\\0\end{pmatrix} \\
&=  \begin{pmatrix} C\sum\limits_{m=-T}^T \hat\alpha_m Q^m & C\sum\limits_{m=-T}^T \hat\gamma_m Q^m \\ C^\dagger\sum\limits_{m=-T}^T \hat\beta_m Q^m & C^\dagger\sum\limits_{m=-T}^T \hat\delta_m Q^m \end{pmatrix}\begin{pmatrix}\bm b\\0\end{pmatrix} 
= \begin{pmatrix} C\left(\sum\limits_{m=-T}^{T} \hat\alpha_mQ^m\right)\bm b \\ C^\dagger\left(\sum\limits_{m=-T}^{T} \hat\beta_mQ^m\right)\bm b \end{pmatrix}
\end{align}
We now attempt to upper bound the MSE loss obtained from our estimator $\tilde x = \sum_{m=-T}^{T} \alpha_mQ^m$:
\begin{align}
\min_{\{\alpha_m\}_{m\in [-T.. T]}} \left\|C\left(\sum_{m=-T}^{T} \alpha_mQ^m\right)\bm b - \bm b\right\|^2 &\le \left\|C\left(\sum_{m=-T}^{T} \hat\alpha_mQ^m\right)\bm b - \bm b\right\|^2  \\
&= \left\|\begin{pmatrix} C\left(\sum\limits_{m=-T}^{T} \hat\alpha_mQ^m\right)\bm b \\ 0 \end{pmatrix} - \begin{pmatrix} b \\ 0 \end{pmatrix}\right\|^2\\
&\le \left\|\begin{pmatrix} C\left(\sum\limits_{m=-T}^{T} \hat\alpha_mQ^m\right)\bm b \\ C^\dagger\left(\sum\limits_{m=-T}^{T} \hat\beta_mQ^m\right)\bm b \end{pmatrix} - \begin{pmatrix} \bm b \\ 0 \end{pmatrix}\right\|^2\\
&\le \|C'g(C'){\bm b'}-{\bm b'}\|^2,
\end{align}
where the second-to-last inequality can be obtained by simply observing the fact that replacing zero terms in a vector with non-zero terms increases the value of the $\ell_2$ norm, and the last equality is obtained by replacing the term $\begin{pmatrix} C\left(\sum_{m=-T}^{T} \hat\alpha_mQ^m\right)\bm b && C^\dagger\left(\sum_{m=-T}^{T} \hat\beta_mQ^m\right)\bm b \end{pmatrix}^T$ with $C'g(C'){\bm b'}$.

Similar to the Hermitian case, we obtain the following:
\begin{equation}
\|C'g(C'){\bm b'} -{\bm b'}\|^2 = \|C'C'^{-1}{\bm b'}-{\bm b'}\|^2 + \|C'(g(C')-C'^{s-1}){\bm b'}\|^2.
\end{equation}
Noting that $C'^{-1} =\begin{pmatrix} O & C^{\dagger -1}\\ C^{-1} & O\end{pmatrix}$, we can recast the first term as follows:
\begin{align}
\|C'C'^{-1}{\bm b'}-{\bm b'}\|^2 &= \left\|\begin{pmatrix}O & C\\ C^\dagger & O\end{pmatrix}\begin{pmatrix} O & C^{\dagger -1}\\ C^{-1} & O\end{pmatrix}{\bm b'}-{\bm b'}\right\|^2 = \left\|\begin{pmatrix}CC^{-1} & O\\ O & C^\dagger C^{\dagger-1}\end{pmatrix}\begin{pmatrix}\bm b\\ 0 \end{pmatrix}-\begin{pmatrix}\bm b\\ 0 \end{pmatrix}\right\|^2\\
&=\left\|\begin{pmatrix}CC^{-1}\bm b \\ 0\end{pmatrix}-\begin{pmatrix}\bm b\\ 0 \end{pmatrix}\right\|^2 = \|CC^{-1}\bm b - \bm b\|^2 = \min_{x\in \mathbb{C}^{N}}\|Cx-\bm b\|^2
\end{align}
As $\|{\bm b'}\| = 1$, the second term is upper bounded as follows:
\begin{equation}
    \|C'(g(C')-C'^{-1}){\bm b'}\|^2 \le \|C'\|^2\|g(C')-C'^{-1}\|^2\|{\bm b'}\|^2 = \|C\|^2\frac{4\nu^2}{\|C\|^2} \le 4\nu^2 \le 4\nu
\end{equation}

Hence, we see that for the non-Hermitian case, the solution obtained by our truncation threshold $T\in\mathcal{O}(K\cdot\kappa_C\log\frac{\kappa_C}{\nu})$ is still $\nu$-close in terms of the MSE loss.
\end{proof}

\subsection{Measurement bounds}
In this section, we provide the lemmas and propositions deferred from the main text that are used to find the number of measurements required.
\begin{lemma}[Matrix Bernstein inequality; Corollary 6.1.2~\citep{tropp2015introduction}]\label{lemmaMatrixBernsteinInequality}
Consider a finite sequence $\left\{\hat{S}_k\right\}_k$ of independent random matrices with a common dimension $d_1 \times d_2$. Assume that each matrix has uniformly bounded deviation from its mean:
$$\forall k,\quad \left\|\hat{S}_k -\bar{S}_k\right\|\le L,$$ where $\bar{S}_k=\bb{E}\left[\hat{S}_k\right]$ denotes the mean value.
Let $\hat{Z} = \sum_k \hat{S}_k$ with its mean $\bar{Z}=\bb{E}\left[\hat{Z}\right]$ and let $\mathrm{MVar}\left[\hat{Z}\right]$ denote the matrix variance statistic given by:
\begin{align*}
\mathrm{MVar}\left[\hat{Z}\right] &= \max \left\{ \left\|\bb{E}\left[\left(\hat{Z}-\bar{Z}\right)\left(\hat{Z}-\bar{Z}\right)^*\right]\right\|, \left\|\bb{E}\left[\left(\hat{Z}-\bar{Z}\right)^*\left(\hat{Z}-\bar{Z}\right)\right]\right\| \right\}\\
&= \max \left\{ \left\|\sum_k \bb{E}\left[\left(\hat{S}_k - \bar{S_k}\right)\left(\hat{S}_k - \bar{S_k}\right)^*\right]\right\|, \left\|\sum_k \bb{E}\left[\left(\hat{S}_k - \bar{S_k}\right)^*\left(\hat{S}_k - \bar{S_k}\right)\right]\right\| \right\}.
\end{align*}
Then for all $t\ge 0$, 
$$\Pr\left[\left\|\hat{Z} - \bar{Z}\right\| \ge t\right] \le (d_1 + d_2) \cdot \exp \left( \frac{-t^2/2}{\mathrm{MVar}\left[\hat{Z}\right] + Lt/3} \right).$$
Further, we have the estimation with respect to the range of $t$ as given by \begin{align*}\Pr\left[\left\|\hat{Z} - \bar{Z}\right\| \ge t\right] \le \begin{cases} (d_1+d_2)\exp\left(-\frac{3t^2}{8\mathrm{MVar}\left[\hat{Z}\right]}\right), & t\le \frac{\mathrm{MVar}\left[\hat{Z}\right]}{L};  \\ (d_1+d_2)\exp\left(-\frac{3t^2}{8L}\right), & t> \frac{\mathrm{MVar}\left[\hat{Z}\right]}{L}.
\end{cases}
\end{align*} 

\end{lemma}
\begin{lemma}[Vector Bernstein inequality; Lemma 18~\citep{kohler2017subsampled}]\label{lemmaVectorBernsteinInequality}
Let $\hat{X}_1, \dots, \hat{X}_n$ be independent vector-valued random variables with common dimension $d$ and assume that each one is centered, uniformly bounded, and has a variance bounded from above:
$$\bar{X}_i=\bb{E}\left[\hat{X}_i\right]=0, \text{ and } \left\|\hat{X}_i\right\|_2\le \mu, \text{ as well as } \left\|\bb{E}\left[\hat{X}_i^2\right]\right\|\le\sigma^2$$
Let $\hat{Z}=\frac{1}{n}\sum_{i=1}^{n}\hat{X}_i.$ Then we have for $0<\varepsilon<\sigma^2/\mu$, $$\Pr\left[\left\|\hat{Z}\right\|\ge\varepsilon\right]\le\exp\left(-n\cdot\frac{\varepsilon^2}{8\sigma^2} + \frac{1}{4}\right)$$
\end{lemma}

We have the following proposition based on~\cref{lemmaMatrixBernsteinInequality} and~\cref{lemmaVectorBernsteinInequality} to support the above theorem to provide a tight upper bound on the measurement error of auxiliary system $W$ and $r$ in~\cref{eqAuxiliary}.
\begin{proposition}[Matrix norm bound]\label{propMatrixNorm}
Given $K$-banded circulant matrix $C = \sum_{\ell=-K}^K c_\ell Q^\ell \in \mathcal M_{N}(\mathbb{C})$, hardware efficient implementation of $U_b$ as shown in \cref{assumptionHardware} and truncation threshold $T$. Let $\hat{W}$ and $\hat{r}$ be the estimates of the auxiliary system $W$ and $r$ in \cref{eqAuxiliary}. Denote $\varepsilon_H$ as the additive errors caused by measurements, the spectral norms are bounded by
$$\left\|\hat{W}-W\right\|\le\ca{O}\left(\sqrt{(K+T)\log{\frac{T}{\delta}}}\cdot B^2\varepsilon_H\right),\quad \left\|\hat{r}-r\right\|\le\ca{O}\left(\sqrt{(K+T)\log{\frac{1}{\delta}}}\cdot  B\varepsilon_H\right),$$
with a probability at least $1-\delta$ using a total of $\ca{O}((K+T)/\varepsilon_H^2)$ measurements.
\end{proposition}
\begin{proof}
We first calculate the bound for $\left\|\hat{W}-W\right\|$. Note that $\hat{W}-W$ can be decomposed as a sum of tensor products such that 
\begin{equation}
\left\|\hat{W}-W\right\|=\left\|I\otimes \left(\re\left\{\hat{V}\right\} - \re\left\{V\right\}\right) -iY\otimes \left(\im\left\{\hat{V}\right\}-\im\left\{V\right\}\right)\right\|,
\end{equation}
where $I$ is the identity, $Y$ is the Pauli Y gate, and $\hat{V}$ is the estimates of $V$. By the triangle inequality, we have 
\begin{align}
\left\|\hat{W}-W\right\|&\le\left\|I\otimes \left(\re\left\{\hat{V}\right\} - \re\left\{V\right\}\right)\right\|+\left\|iY\otimes \left(\im\left\{\hat{V}\right\}-\im\left\{V\right\}\right)\right\|\\
&\le\left\|I\right\|\cdot\left\|\re\left\{\hat{V}\right\} - \re\left\{V\right\}\right\|+\left\|iY\right\|\cdot\left\|\im\{\hat{V}\}-\im\{V\}\right\|\\
&=\left\|\re\{\hat{V}\} - \re\{V\}\right\|+\left\|\im\{\hat{V}\}-\im\{V\}\right\|.
\end{align}
By \cref{eqVjk}, we know that $\hat{V}$ includes estimations of different inner products $\braket{\bm b,Q^{p}\bm b}$ with $p:=y-z+k-j$ for $y, z\in[-K, K]$ and $j, k\in[-T, T]$. The total number of different measurements is at most $4K+4T+1$ for estimating the real part or imaginary part. 

We denote $\re\left\{\hat{V}\right\}$ by $\hat{R}_V$ with mean $\bar{R}_V=\bb{E}\left[\hat{R}_V\right]$ and denote $\im\left\{\hat{V}\right\}$ by $\hat{I}_V$ with mean $\bar{I}_V=\bb{E}\left[\hat{I}_V\right]$. We first provide a bound for $\left\|\hat{R}_V - \bar{R}_V\right\|$. We obtain each measurement of $\re\left\{\braket{\bm b, Q^{y-z+k-j}\bm b}\right\}$ by setting an estimator $\hat{o}^{(p)}$ with mean $\bar{o}^{(p)}=\bb{E}\left[\hat{o}^{(p)}\right]$. Then both $\hat{R}_V$ and $\bar{R}_V$ are written as summations of $4K+4T+1$ independent terms given by \begin{align}\hat{R}_V=\sum_{p=-2K-2T}^{2K+2T}\hat{R}_V^{(p)}=\sum_{p=-2K-2T}^{2K+2T}R_V^{(p)} \hat{o}^{(p)}, \quad\bar{R}_V=\sum_{p=-2K-2T}^{2K+2T}\bar{R}_V^{(p)}=\sum_{p=-2K-2T}^{2K+2T}R_V^{(p)} \bar{o}^{(p)}.
\end{align}
One can then observe that $R_V^{(p)}$ is a Toeplitz matrix in which each diagonal from left to right is constant. According to different value of $p$, we categorize the matrix structures of $R_V^{(p)}$ into five types given by 
\begin{align}\label{eqRp}R_V^{(p)}=
\begin{cases}
     BT1_V^{(p)},& p\in[-2K-2T.. 2K-2T]; \\
     BT2_V^{(p)},& p\in[2K-2T+1.. -2K-1]; \\
     BT3_V^{(p)},& p\in[-2K.. 2K];  \\
     BT4_V^{(p)},& p\in[2K+1.. -2K+2T-1]; \\
     BT5_V^{(p)},& p\in[-2K+2T.. 2K+2T].
\end{cases}
\end{align}
For an explicit illustration, we depict the above five types of banded Toeplitz (BT) matrices with elements from the coefficient set $\{a_{-2K},\dots, a_{2K}\}$ with the structures shown below.
\begin{align*}
     &\resizebox{0.48\textwidth}{!}{$BT1_V^{(p)}=\begin{bmatrix}
     0 & \cdots & \cdots & \cdots& \cdots& \cdots& \cdots& \cdots & 0\\ 
     \vdots & & & & & & & & \vdots \\
     \vdots & & & & & & & & \vdots \\
     \vdots & & & & & & & & \vdots \\
     \vdots & & & & & & & & \vdots \\
     0 & & & & & & &  & \vdots\\
     a_{-2K} &\ddots &  & &  & && & \vdots \\
     \vdots & \ddots & \ddots &  & &  && & \vdots \\
     a_{2T+p} & \cdots & a_{-2K} & 0 & \cdots & \cdots& \cdots& \cdots & 0 \end{bmatrix}$},
     \resizebox{0.48\textwidth}{!}{$BT2_V^{(p)}=\begin{bmatrix}
     0 & \cdots & \cdots & \cdots& \cdots& \cdots& \cdots& \cdots & 0\\
     \vdots & & & & & & & & \vdots \\
     0& & & & & & & & \vdots \\
     a_{-2K}  &\ddots & & & & & & & \vdots \\
     \vdots &\ddots &\ddots & & & & & & \vdots \\
     a_{2K} &\ddots &\ddots &\ddots & & & &  & \vdots\\
     0 &\ddots &\ddots  &\ddots &\ddots  & && & \vdots \\
     \vdots & \ddots & \ddots &\ddots  &\ddots &\ddots  && & \vdots \\
     0 & \cdots & 0& a_{2K} & \cdots & a_{-2K}& 0& \cdots & 0 \end{bmatrix}$}, \\
     &\resizebox{0.48\textwidth}{!}{$BT3_V^{(p)}= \begin{bmatrix}
     a_{p} & \cdots & a_{-2K} & 0& \cdots& \cdots& \cdots& \cdots & 0\\ 
     \vdots &\ddots &\ddots &\ddots &\ddots & & & & \vdots \\
     a_{2K}&\ddots &\ddots &\ddots &\ddots &\ddots & & & \vdots \\
     0  &\ddots &\ddots &\ddots &\ddots &\ddots & \ddots && \vdots \\
     \vdots &\ddots &\ddots &\ddots &\ddots &\ddots &\ddots &\ddots & \vdots \\
     \vdots & &\ddots &\ddots &\ddots &\ddots &\ddots &\ddots  & 0\\
     \vdots &&  &\ddots &\ddots  &\ddots &\ddots&\ddots & a_{-2K} \\
     \vdots & &  & &\ddots &\ddots  &\ddots&\ddots & \vdots \\
     0 & \cdots & \cdots&\cdots & \cdots & 0& a_{2K}& \cdots & a_{p} \end{bmatrix}$},
     \resizebox{0.48\textwidth}{!}{$BT4_V^{(p)}=\begin{bmatrix}
     0 & \cdots & 0 & a_{2K}& \cdots& a_{-2K}& 0& \cdots & 0\\ 
     \vdots & & &\ddots &\ddots &\ddots & \ddots&\ddots & \vdots \\
     \vdots& & & &\ddots &\ddots &\ddots &\ddots & 0 \\
     \vdots  & & & &&\ddots & \ddots &\ddots& a_{-2K} \\
     \vdots & & & & & &\ddots &\ddots & \vdots \\
     \vdots & && & & & &\ddots  & a_{2K}\\
     \vdots &&  & &  & && & 0 \\
     \vdots & &  & & &  && & \vdots \\
     0 & \cdots & \cdots&\cdots & \cdots & \cdots& \cdots& \cdots &0 \end{bmatrix}$},\\
     &\resizebox{0.48\textwidth}{!}{$BT5_V^{(p)}=\begin{bmatrix}
     0 & \cdots & \cdots &\cdots& \cdots& 0& a_{2K}& \cdots & a_{-2T+p}\\ 
     \vdots & & & & & &\ddots &\ddots & \vdots \\
     \vdots&&&&&&&\ddots & a_{2K} \\
     \vdots &&&&&&&& 0\\
     \vdots &&&&&&&& \vdots \\
     \vdots &&&&&&&& \vdots \\
     \vdots &&&&&&&& \vdots \\
     \vdots &&&&&&&& \vdots \\
     0 & \cdots & \cdots&\cdots & \cdots & \cdots& \cdots& \cdots &0 \end{bmatrix}$}.
\end{align*}
We note that the 1-norm and $\infty$-norm of matrix $R_V^{(p)}$ are given by \begin{align}
    \left\|R_V^{(p)}\right\|_1, \left\|R_V^{(p)}\right\|_{\infty}\le \sum_{m=-2K}^{2K} \left|a_{m}\right|\le B^2.
\end{align}
Hence the spectral norm of matrix $R_V^{(p)}$ is also upper bounded by \begin{align}
    \left\|R_V^{(p)}\right\| \le \sqrt{\left\|R_V^{(p)}\right\|_1 \cdot \left\|R_V^{(p)}\right\|_{\infty}}\le B^2.
\end{align}
By Hadamard tests, measurement results are collected to obtain a good estimation of $\hat{o}^{(p)}$ that
$\left|\hat{o}^{(p)} - \bar{o}^{(p)}\right|\le \varepsilon_H$,  using $\mathcal{O}(\frac{1}{\varepsilon_H^2}\log \frac{K+T}{\delta_H})$ samples with a success probability at least $1-\delta_H$ for all $p$. Hence the error bound of each random variable $\hat{R}_V^{(p)}$ is given by \begin{align}\label{eqRandomVariableBound}
    \left\|\hat{R}_V^{(p)}-\bar{R}_V^{(p)}\right\|=\left\|R_V^{(p)}\right\|\left|\hat{o}^{(p)}-\bar{o}^{(p)}\right|\leq B^2 \varepsilon_H =: L.
\end{align}
By the triangle inequality, we observe that 
\begin{align}\label{eqSummationOfRandomVariablesBound}
\left\|\hat{R}_V-\bar{R}_V\right\|\leq\sum_{p=-2K-2T}^{2K+2T}\left\|\hat{R}_V^{(p)}-\bar{R}_V^{(p)}\right\|\le (4K+4T+1)B^2\varepsilon_H.
\end{align}
The matrix variance statistic $\textrm{MVar}\left[\hat{R}_V\right]$ is also upper bounded by the triangle inequality that
\begin{align}\label{eqMVarBound}
\textrm{MVar}\left[\hat{R}_V\right]\le \sum_{p=-2K-2T}^{2K+2T} \left\|\hat{R}_V^{(p)}-\bar{R}_V^{(p)}\right\|^2\le (4K+4T+1)B^4\varepsilon_H^2.
\end{align}
For $t\le \left\|\hat{R}_V-\bar{R}_V\right\|\leq (4K+4T+1) B^2\varepsilon_H=\textrm{MVar}\left[\hat{Z}\right]/L$, we leverage matrix Bernstein inequality in \cref{lemmaMatrixBernsteinInequality} to get that, for all $t\ge 0$, we have
\begin{align}
    \Pr\left[\left\|\hat{R}_V-\bar{R}_V\right\|\ge t\right] \le (4T+2)\exp\left[-\frac{3t^2}{8(4K+4T+1)B^4\varepsilon_H^2}\right].
\end{align}
We then restrict the failure probability by a small value $\delta_{R_V} \in (0, \frac{1}{2})$ and get \begin{align}
    (4T+2)\exp\left[-\frac{3t^2}{8(4K+4T+1)B^4\varepsilon_H^2}\right]\leq \delta_{R_V}.
\end{align}
We can then find that $t=\ca{O}(\sqrt{(K+T)\log{(T/\delta_{R_V})}}B^2\varepsilon_H)$. Therefore, we can see the bound holds that \begin{align} \left\|\hat{R}_V-\bar{R}_V\right\|\le \ca{O}(\sqrt{(K+T)\log{(T/\delta_{R_V})}}B^2\varepsilon_H)\end{align} with a probability at least $1-\delta_{R_V}$. The same argument applies when estimating the imaginary part, and hence we have \begin{align} \left\|\hat{I}_V-\bar{I}_V\right\|\le \ca{O}(\sqrt{(K+T)\log{(T/\delta_{I_V})}}B^2\varepsilon_H)\end{align} with a probability at least $1-\delta_{I_V}$. By setting $\delta_{R_V} = \delta_{I_V} = \delta/2$, the spectral norm of $\hat{W}-W$ is given by \begin{align}\|\hat{W}-W\|\le \left\|\hat{R}_V-\bar{R}_V\right\| + \left\|\hat{I}_V-\bar{I}_V\right\| \le \ca{O}(\sqrt{(K+T)\log{(T/\delta)}}B^2\varepsilon_H),\end{align} with a probability at least $1-\delta$ by matrix Bernstein inequality. 

For the bound on $\left\|\hat{r}-r\right\|$, we utilize the vector Bernstein inequality in \cref{lemmaVectorBernsteinInequality}. Given that $\left\|\hat{r}-r\right\|\le \left\|\re\left\{\hat{r}\right\}-\re\{r\}\right\| + \left\|\im\left\{\hat{r}\right\}-\im\{r\}\right\|$, we again denote $\re\left\{\hat{r}\right\}$ by $\hat{R}_r$ with mean $\bar{R}_r=\bb{E}\left[\hat{R}_r\right]$ and denote $\im\left\{\hat{r}\right\}$ by $\hat{I}_r$ with mean $\bar{I}_r=\bb{E}\left[\hat{I}_r\right]$. Given that \begin{align}\left\|\hat{R}_r-\bar{R}_r\right\|=\left\|\sum_{p=-K-T}^{ K+T}R_r^{(p)}\hat{o}^{(p)}\right\|,\end{align} where $R_r^{(p)}$ has at most $2K+1$ nonzero entries and $\hat{o}^{(p)}$ is the estimator corresponding to the real part of inner product estimation of $\braket{\bm b, Q^{p}\bm b}$ with $p:=y+j$ for $y\in[-K, K]$ and $j\in[-T, T]$. Given that $\left\|R_r^{(p)}\right\| \le \left\|R_r^{(p)}\right\|_1\le B$, we have \begin{align}\left\|\hat{R}_r-\bar{R}_r\right\|\leq (2K+2T+1) B \varepsilon_H.\end{align} We then utilize the vector Bernstein inequality in \cref{lemmaVectorBernsteinInequality} and follow a similar proof as shown in bounding the matrix norm. Hence we can then obtain a tighter bound of \begin{align}\|\hat{r}-r\|\le\ca{O}(\sqrt{(K+T)\log{(1/\delta_r)}}B\varepsilon_H),\end{align} with a probability at least $1-\delta_r$. Taking $\delta_r =\delta$ finishes the proof.

\end{proof}

\begin{proposition}[Proposition 12~\citep{huang2021term}]
\label{propGuaranteesubspace}
Consider a quadratic function $L(z) = z^T W z - 2 r^T z $, where $W \in R^{m\times m}$ is positive definite and $z \in \mathbb{R}^m$.
Let $z^* = {\rm arg} \min_{z \in \mathbb R^m} L(z)$.
Let  $\hat{W}$ and $\hat{r}$ be estimates of $W$ and $r$ where $\|\hat W- W\| = \varepsilon_W$ and $\|\hat r- r\| = \varepsilon_r$.
The solution $\hat z$ of the quadratic optimization $\hat{L}(z) = z^T \hat{W} z - 2 \hat{r}^T z$, satisfies
\begin{equation*}
 L(\hat z) - L(z^*) \le \varepsilon,
\end{equation*}
if $\max(1/\varepsilon_W^2, 1/\varepsilon_r^2) > C \|W\| \|W^{-1}\|^2 (1+\|z^*\|)^2 / \varepsilon$.
\end{proposition}

\begin{proof}

Let $z^* = W^{-1} r$ be the solution to the original problem. The solution $\hat z = \hat{W}^{-1} \hat{r}$ minimizes  $\hat{L}(z) = z^T \hat{W} z - 2 \hat{r}^T z$.
Thus we have
$\hat{W} \hat z - W \hat z + W \hat z - W z^* = \hat{r} - r.$
This gives $(\hat z - z^*) = W^{-1} (\hat{r} - r - (\hat{W} - W) \hat z)$. Hence $\|\hat z - z^*\|\le \|W^{-1}\|(\|\hat r- r\| + \|\hat W- W\| \|\hat z\|) \le \|W^{-1}\| \|\hat r- r\| + \|W^{-1}\| \|\hat W- W\| \|z^*\| + \|W^{-1}\| \|\hat W- W\| \|\hat z - z^*\|$.
This gives
\begin{align}\left\|\hat z - z^*\right\| \le \frac{\left\|W^{-1}\right\|\left(\left\|\hat r- r\right\| + \left\|\hat W- W\right\| \left\|z^*\right\|\right)}{1 - \left\|W^{-1}\right\| \left\|\hat{W}-W\right\|} \le \sqrt{\varepsilon / \|W\|}.\end{align}
The last inequality requires setting $1/\varepsilon_W^2 > C \|W\| \|W^{-1}\|^2 \|z^*\|^2 / \varepsilon$ and $1/\varepsilon_r^2 > C \|W\| \|W^{-1}\|^2/ \varepsilon$, with a constant $C$, such that $1 - \|W^{-1}\| \|\hat W- W\| \ge 1/2$ , $\|W^{-1}\|\|\hat r- r\| \le \frac{1}{4} \sqrt{\varepsilon / \|W\|}$, and $ \|W^{-1}\|\|\hat W- W\| \|z^*\| \le \frac{1}{4} \sqrt{\varepsilon / \|W\|}$.
As the gradient of $L(z)$ is zero at $z^*$, we have $L(\hat z) - L(z^*) \le \|W\| \left\|\hat z - z^*\right\|^2 \le \varepsilon$.
\end{proof}

\section{Quantum subroutines}\label{appendixQroutines}
In the following section, we provide brief reviews of quantum subroutines used in our paper.

\subsection{Hadamard test.} Hadamard tests~\citep{cleve1998quantum} are used as a method to obtain a random variable whose expectation value is the real (or imaginary) part of $\braket{\psi|U|\psi}$, where $\ket \psi$ is a quantum state and $U$ is unitary.  Given an additional ancilla qubit, one can prepare the controlled version of $U$ and run the quantum circuit given by \cref{figRealHadamard} for finding the real part and \cref{figImagHadamard} for finding the imaginary part. To compute the expectation value, obtain the probability of measuring 1 minus the probability of measuring 0. Invert the sign when obtaining the imaginary part. Note that the expected values produced by Hadamard test have additive accuracy $\varepsilon$ with failure probability at most $\delta$ using $\mathcal{O}\left(\frac{1}{\varepsilon^2} {\rm log} (\frac{1}{\delta})\right)$ measurements, which can be shown using Hoeffing's inequality.

\begin{figure}
    \centering
    \subfloat[\label{figRealHadamard}]{
        \includegraphics[height=4em]{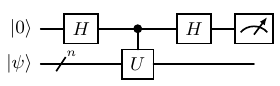}
    }
    \subfloat[\label{figImagHadamard}]{
        \includegraphics[height=4em]{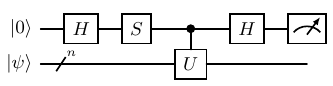}
    }

    \caption{Circuit implementation of the Hadamard test. \cref{figRealHadamard} is the circuit for retrieving $\re\braket{\psi|U|\psi}$, while \cref{figImagHadamard} is the circuit for retrieving $\im\braket{\psi|U|\psi}$.}
    \label{figHadamardTest}
\end{figure}

\subsection{Quantum Fourier transform.} 
Given $N=2^n$, the quantum Fourier transform~\citep{coppersmith1994approximate} performs the following transformation on a quantum state:
\begin{equation}
\ket b = \sum_{x=0}^{N-1} b_x \ket x \longrightarrow \underbrace{\frac{1}{\sqrt N} \sum_{x=0}^{N-1} b_x \omega_N^{xy}\ket y}_{F_N\ket b}.
\end{equation}
Given the following gate:
\begin{equation}
    R_k = \begin{pmatrix} 1 & 0\\ 0 & \omega_N^k \end{pmatrix},
\end{equation}
QFT can be implemented as \cref{figQFT}.

\begin{figure}
\centering
\includegraphics[width=0.98\textwidth]{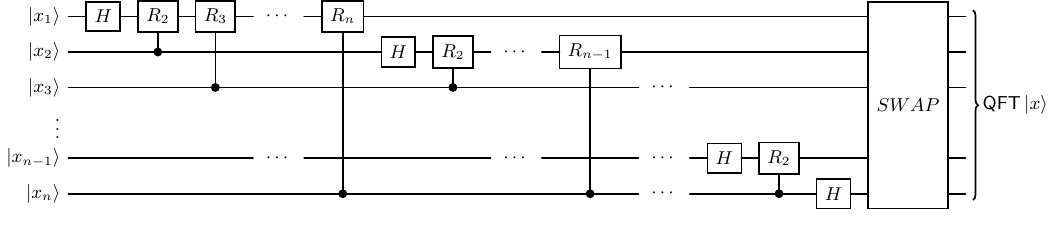}
    \caption{Implementation of the quantum Fourier transform, where the final SWAP gate reverses the order of qubits.}
    \label{figQFT}
\end{figure}
The quantum Fourier transform as proposed by \citet{coppersmith1994approximate} is implemented as a unitary acting on $n$ qubits, which makes it an operator of size $2^n \times 2^n$. One might be interested in discrete Fourier transforming a vector of $N$-amplitudes, where $2^{n-1} < N < 2^n$ for some $n$. This could be achieved via generalized QFT (first conceived of as an approximation by \citet{kitaev1995quantum} and made exact by \citet{mosca2004exact}), which makes use of quantum phase estimation (QPE) proposed in the former paper, which itself incorporates the `standard' QFT by \citet{coppersmith1994approximate}.

\section{Quantum-inspired algorithm}~\label{sectSample}
We now elaborate on the implementation of \cref{algoMain} using purely classical computers that use techniques used by \citet{tang2019quantum} to estimate inner products. Although this is not the most efficient classical algorithm, it serves as a direct comparison to our quantum algorithm using classical combinations of quantum states. To use these techniques, we require sample and query access to the vector $\bm b$.
\begin{definition}[Sample and query access~\citep{tang2019quantum, tang2021quantum}]
For vector $x\in \mathbb{C}^N$, let $\mathcal{D}_x$ be the distribution over $[N]$ such that a sample $X\sim\mathcal{D}_x$ satisfies $\Pr(X=k) = \frac{|x_k|^2}{\|x\|^2}$. A sample access of $x$ returns $k \in [N]$ over $\mathcal{D}_x$. A query access of $x$ with supplied index $k$ returns the $k$-th element of $x$.
\end{definition}
We then make the following assumption:
\begin{assumption}\label{assumptionSample}
Assume that there exists a classical data structure $\mathcal{S}_{\bm b}$ containing sufficient information of vector $\bm b$ such that command {\sc Sample($\mathcal{S}_{\bm b}$)} achieves sample access to $\bm b$ and command {\sc Query($\mathcal{S}_{\bm b}$, $i$)} achieves query access to the $i$-th element of $\bm b$. Further, assume that the input of vector $\bm b$ to \cref{algoMain} is given in the form of $\mathcal{S}_{\bm b}$.
\end{assumption}
A binary tree implementation of such a data structure has been shown by~\citet{chia2018quantuminspired}. The leaf nodes of the tree store the square of the absolute value of each entry, in addition to its original value, while the internal nodes store the sum of their children. As a result, the root node stores the square of the $\ell_2$ norm of the vector. Each sample and query access then requires $\mathcal O(\log N)$ time. 

\begin{figure}
\begin{algorithm}[H]
\caption{Classical subroutine for inner product estimation with sample and query access}
\label{algoAccess}
\DontPrintSemicolon
\KwIn{Classical data structure $\mathcal{S}_{\bm b}$ as shown in \cref{assumptionSample}, Power of cyclic perturbation matrix $m$, Additive error $\varepsilon$, Failure rate $\delta$}
\KwOut{$\varepsilon$-close estimation of $\braket{\bm b, Q^m \bm b}$ with success rate $1-\delta$}
\For{$i \gets 0$ to $6\log(2/\delta)$}{
    $\eta_i \gets 0$\;
    \For{$j \gets 0$ to $9/\varepsilon^2$}{
        $s \gets \textsc{Sample}(\mathcal{S}_{\bm b})$\;
        $q \gets \textsc{Query}(\mathcal{S}_{\bm b}, s-m \pmod N)$\;
        $r \gets \textsc{Query}(\mathcal{S}_{\bm b}, s)$\;
        $\eta_i \gets \eta_i + q/r$\;
    }
    $\eta_i \gets \eta_i$\;
}
\Return \textsc{Median}($\eta$)
\end{algorithm}
\end{figure}

\begin{figure}
\includegraphics[width=\onecolumnfigurewidth]{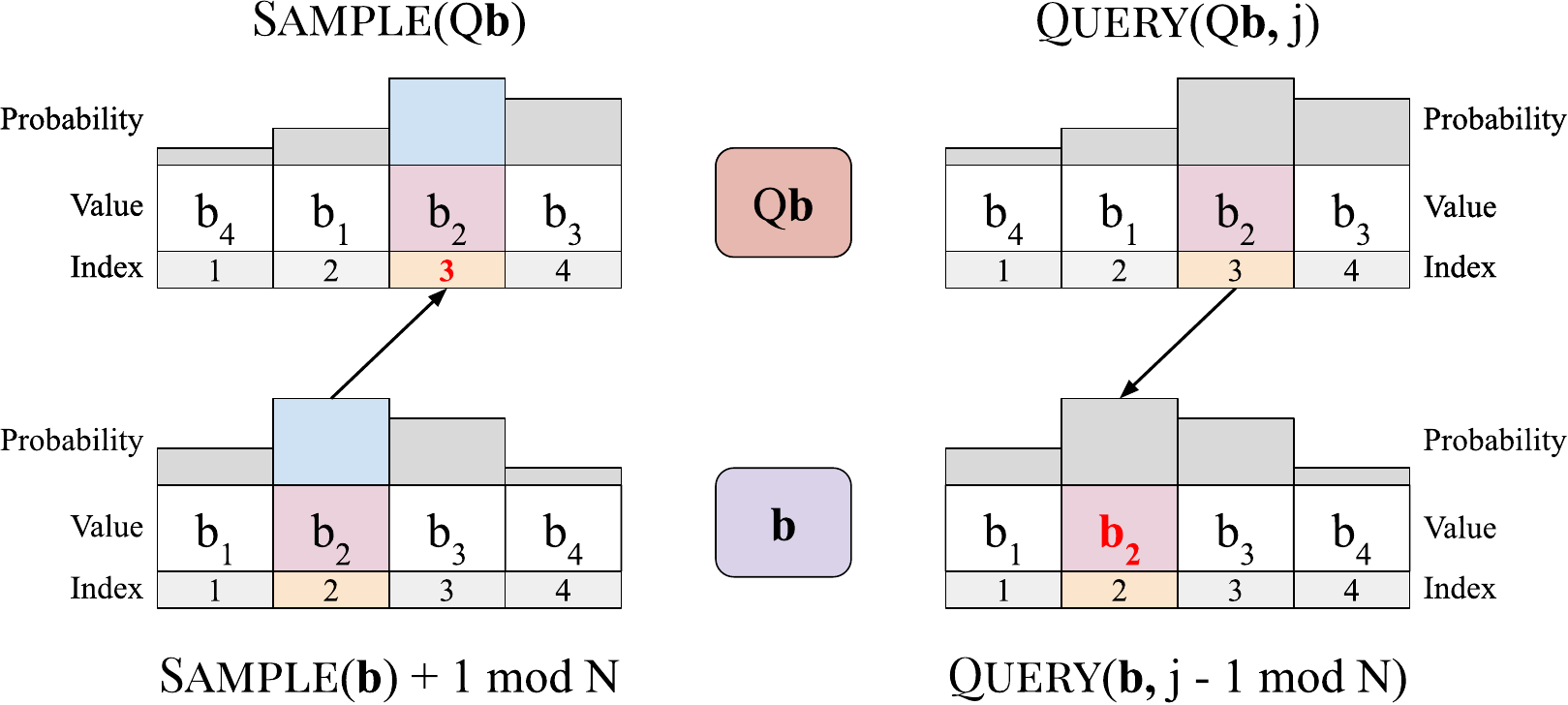}
\caption{We illustrate how sample and query access to vector $\bm b$ provides the same access to $Q\bm b$. Sample access to $Q\bm b$ is equivalent to shifting the output of sample access to $\bm b$, while query access to $Q\bm b$ is equivalent to shifting the input to query access to $\bm b$. Note that sample access to $Q\bm b$ is not needed for \cref{algoAccess}.}
\label{figSample}
\end{figure}

Although the hybrid algorithm partially implemented on quantum hardware decomposes $Q$ into a QFT operation and phase shift operations, the quantum-inspired version does not perform any Fourier-related operations due to its high classical time complexity. Instead, we use the fact that the application of $Q$ on vector $\bm b$ performs a cyclic permutation of vector $\bm b$. Observe that the query access to $Q^m\bm b$ retrieves the elements of $b$ shifted $m$ times to the right and sample access to $Q^m\bm b$ outputs an index that corresponds to the index $m$ times to the right of the original vector $\bm b$, as illustrated in \cref{figSample}.

We estimate the value of $\braket{\bm b, Q^m\bm b}$ by \cref{algoAccess}, which is modified from Algorithm 1 of~\citep{tang2021quantum}, and whose performance guarantee of requiring $\mathcal O (\frac{1}{\varepsilon^2}\log\frac{1}{\delta})$ sample and query accesses to execute is ensured by Proposition 3 of the same paper. Such a process can effectively replace the use of Hadamard tests, allowing us to perform Steps 1 to 4 of \cref{algoMain} on classical computers. 

Further, the classical sketch of the output vectors $\bm u_m$ can be constructed by using an index shifting wrapper function $\mathcal F_m$ that wraps data structure $\mathcal{S}_{\bm b}$ such that the command \textsc{Sample}($\mathcal F_m(\mathcal{S}_{\bm b})$) returns \textsc{Sample}($\mathcal{S}_{\bm b}$)$+m \pmod N$ and command \textsc{Query}($\mathcal F_m(\mathcal{S}_{\bm b}, i)$) returns \textsc{Query}($\mathcal{S}_{\bm b}, i-m \pmod N$). We can see that the sample and query accesses to $\mathcal S_{\bm u_m}$ is equivalent to $\mathcal{F}_m(\mathcal{S}_{\bm b})$, and $\mathcal{F}_m(\mathcal{S}_{\bm b})$ can be used to represent the sketch for $\mathcal{S}(\bm u_m)$. 

With the above, we have produced a quantum-inspired classical algorithm with the same performance guarantee. Note that the total number of query and sample accesses required to execute \cref{algoMain} matches the upper bound of quantum measurements required as supplied in \cref{propMeasurement}, which we restate in the current setting:

\begin{lemma}[Number of sample/query accesses needed] \label{propAccess}
Given $K$-banded circulant matrix $C = \sum_{\ell=-K}^K c_\ell Q^\ell \in \mathcal M_{N}(\mathbb{C})$, classical data structure $\mathcal{S}_{\bm b}$ as shown in \cref{assumptionSample} and truncation threshold $T$. This defines the auxiliary systems $W$ and $r$ in \cref{eqAuxiliary}. Let the sum of absolute values of the coefficients of the decomposed $C$ be $B = \sum_{\ell=-K}^{K} |c_\ell|$. We can find $\tilde \alpha: \{\tilde\alpha_m \in \mathbb C, \forall m\in [-T.. T]\}$ for estimator
$\tilde{x}(\alpha) = \sum_{m=-T}^T \alpha_{m}Q^m \bm b$ such that the following is satisfied
\begin{equation*}
\|C\tilde x (\tilde \alpha) - \bm b\| \le \min_{\alpha \in \mathbb C^{2t+1}} \|C\tilde x (\alpha) - \bm b\| + \varepsilon
\end{equation*}
using $\mathcal O (B^4(K+T)^2\|W\|\|W^{-1}\|^2(1+\|W^{-1}r\|^2/\varepsilon)$ measurements via the Hadamard test.
\end{lemma}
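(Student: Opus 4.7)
The proof will parallel Proposition~\ref{propMeasurement} closely, with Algorithm~\ref{algoAccess} replacing the Hadamard test as the primitive estimating each inner product. First I would observe that every entry of the auxiliary system $(W, r)$ reduces to a weighted sum of a small common pool of inner products $\braket{\bm b, Q^m \bm b}$ with $m \in [-2T-2K..2T+2K]$, giving only $4K+4T+1 \in \mathcal O(K+T)$ distinct quantities to estimate. By the triangle inequality applied to the expansions
\[
V_{jk} = \sum_{y,z=-K}^{K} c_y c_z^* \braket{\bm b, Q^{y-z+k-j}\bm b}, \qquad q_j = \sum_{y=-K}^{K} c_y \braket{\bm b, Q^{y+j}\bm b},
\]
an additive error $\varepsilon'$ in each distinct inner product translates into at most $B^2\varepsilon'$ additive error per entry of $\tilde W$ and $B\varepsilon'$ per entry of $\tilde r$.

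Next I would establish the per-inner-product cost. The inner loop of Algorithm~\ref{algoAccess} produces an unbiased sample of $\braket{\bm b, Q^m\bm b}$ via the importance-sampling ratio $q/r = (Q^m\bm b)_s/\bm b_s$ drawn at $s\sim\mathcal D_{\bm b}$, whose variance is bounded by $\|\bm b\|^2\|Q^m\bm b\|^2 = 1$. Applying the Chebyshev-plus-median-of-means concentration proved in Proposition 3 of Tang~(2021), $\mathcal O(\varepsilon'^{-2}\log(1/\delta))$ sample and query accesses suffice for an $\varepsilon'$-close estimate with failure probability $\delta$. Setting $\delta = \Theta(1/(K+T))$ and union-bounding over all $\mathcal O(K+T)$ distinct inner products keeps the logarithmic overhead benign.

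Finally I would propagate these entrywise errors to the solution error via Proposition 12 of \citet{huang2021term}, which states that if each entry of $W$ and $r$ has variance $\mathcal O(1/R)$ then the optimization yields an $\varepsilon$-suboptimal minimizer whenever $R \gtrsim T\|W\|\|W^{-1}\|^2(1+\|W^{-1}r\|^2/\varepsilon)$. Since in our setting the effective variance on $W$ picks up a $B^4$ amplification (and $B^2$ on $r$), we must inflate the per-inner-product sample count by a factor of $B^4$; multiplying by the $\mathcal O(K+T)$ distinct inner products recovers the claimed bound $\mathcal O\!\bigl(B^4 (K+T) T \|W\|\|W^{-1}\|^2 (1+\|W^{-1}r\|^2/\varepsilon)\bigr)$. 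The main obstacle, as in Proposition~\ref{propMeasurement}, is the propagation step: one must verify that the perturbed $\tilde W$ remains invertible and that the minimizer of the perturbed quadratic is $\varepsilon$-suboptimal for the true quadratic, tracking conditioning through $\|W^{-1}\|$. Since Huang et al.\ have already carried out this perturbation analysis in a primitive-agnostic way — requiring only an unbiased estimator of each expectation with controlled variance — here it transfers essentially unchanged, and the substantive new content is merely the verification that Algorithm~\ref{algoAccess} supplies such an estimator with the stated sample/query cost.
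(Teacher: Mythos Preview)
Your proposal is correct and follows essentially the same approach as the paper: the paper treats this lemma as a direct restatement of Proposition~\ref{propMeasurement} with the Hadamard test swapped for Algorithm~\ref{algoAccess}, invoking exactly the same three ingredients you identify—$\mathcal O(K+T)$ distinct inner products, the $\mathcal O(\varepsilon'^{-2}\log(1/\delta))$ sample cost from Proposition~3 of Tang~(2021), and the perturbation bound from Proposition~12 of \citet{huang2021term} with the $B^4$ amplification. Your write-up is in fact more detailed than the paper's own justification, which is little more than a one-paragraph sketch.
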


While \cref{algoAccess} can be dequantized with all quantum subroutines being discarded, we note that \cref{assumptionSample} is rather strong, and the preparation of $\mathcal{S}_{\bm b}$ may take up to $\mathcal O(N)$ time. Similarly, there is no guarantee that $\ket{b}$ can be effectively prepared on a quantum computer, hence the bottleneck of such algorithms lies in the preparation of the vector $\bm b$ for our algorithms, rather than the algorithms themselves. 

\bibliography{main}

\end{document}